\newtheorem{theorem}{Theorem}[section]
\newtheorem{lemma}[theorem]{Lemma}
\newtheorem{definition}{Definition}[section]
\newtheorem{claim}{Claim}[section]
\begin{document}
\frenchspacing
%\title{Density-Consistent Anomaly Detection for Dynamic Networks}
\title{Anomaly Detection in Dynamic Networks of Varying Size}

\author{
Timothy La Fond$^1$, Jennifer Neville$^1$, Brian Gallagher$^2$\\
%       \affaddr{Depts of Computer Science and Statistics$^1$}\\
       {$^1$Purdue University, $^2$Lawrence Livermore National Laboratory}\\
%       \affaddr{Lafayette, IN}\\
       {{\{tlafond,neville\}@purdue.edu}, bgallagher@llnl.gov} \\
% 2nd. author
%Brian Gallagher\\
%       {Lawrence Livermore National Laboratory}\\
%%       \affaddr{Box 808, L-560}\\
%%       \affaddr{Livermore, CA}\\
%       {bgallagher@llnl.gov}
}

\maketitle

\begin{abstract}

Dynamic networks, also called network streams, are an important data representation that applies to many real-world domains.  Many sets of network data such as e-mail networks, social networks, or internet traffic networks are best represented by a dynamic network due to the temporal component of the data.  One important application in the domain of dynamic network analysis is anomaly detection.  Here the task is to identify points in time where the network exhibits behavior radically different from a typical time, either due to some event (like the failure of machines in a computer network) or a shift in the network properties.  This problem is made more difficult by the fluid nature of what is considered "normal" network behavior.  The volume of traffic on a network, for example, can change over the course of a month or even vary based on the time of the day without being considered unusual.  Anomaly detection tests using traditional network statistics have difficulty in these scenarios due to their \textbf{Density Dependence}: as the volume of edges changes the value of the statistics changes as well making it difficult to determine if the change in signal is due to the traffic volume or due to some fundamental shift in the behavior of the network.  To more accurately detect anomalies in dynamic networks, we introduce the concept of \textbf{Density-Consistent} network statistics.  These statistics are designed to produce  results that reflect the state of the network independent of the volume of edges.  On synthetically generated graphs anomaly detectors using these statistics show a a 20-400\% improvement in the recall when distinguishing graphs drawn from different distributions.  When applied to several real datasets Density-Consistent statistics recover multiple network events which standard statistics failed to find, and the times flagged as anomalies by Density-Consistent statistics have subgraphs with radically different structure from normal time steps.

\end{abstract}

\section{Introduction} %Dynamic network anomaly detection, difficulties with standard statistics, prior work

Network analysis is a broad field but one of the more important applications is in the detection of anomalous or critical events.  These anomalies could be a machine failure on a computer network, an example of malicious activity, or the repercussions of some event on a social network's behavior \cite{tartakovsky, siris}.  In this paper, we will focus on the task of anomaly detection in a dynamic network where the structure of the network is changing over time.  For example, each time step could represent one day's worth of activity on an e-mail network.  The goal is then to identify any time steps where the pattern of those communications seems abnormal compared to those of other time steps.  

As comparing the communication pattern of two network examples directly is complex, one simple approach is to summarize each network using a network statistic then compare the statistics.  A number of anomaly detection methods rely on these statistics \cite{snijders, banks,  mcculloh2}.  Another method is to use a network model such as ERGM \cite{snijders4}.  However, both these methods often encounter difficulties when the properties of the network are not static.  

A typical real-world network experiences many changes in the course of its natural behavior, changes which are not examples of anomalous events.  The most common of these is variation in the volume of edges.  In the case of an e-mail network where the edges represent messages, the total number of messages could vary based on the time 
%(there might be less traffic on a holiday or in the early morning) 
or there could be random variance in the number of messages sent each day.  The statistics used to measure the network properties are usually intended to capture some other effect of the network than simply the volume of edges: for example, the clustering coefficient is usually treated as a measure of the transitivity.
% (or preference for triangular edge structures).  
However, the common clustering coefficient measure is {\em statistically inconsistent} as the density of the network changes.  Even on an Erdos-Renyi network, which does not explicitly capture transitive relationships, the clustering coefficient will be greater as the density of the network increases.  

When statistics vary with the number of edges in the network, it is not valid to compare different network time steps using those statistics unless the number of edges is constant in each time step.  A similar effect occurs with network models that employ features or statistics which are size sensitive: \cite{shalizi} show that ERGM models learn different parameters given subsets of the same graph, so even if the network properties are identical observing a smaller portion of the network leads to learning a different set of parameters.  %This means that any anomaly detection process which relies on inconsistent statistics or models produce unreliable results.  

The purpose of this work is to analytically characterize statistics by their sensitivity to network density, and offer principled alternatives that are {\em consistent estimators}, which empirically give more accurate results on networks with varying densities.

%\jn{TODO: Add in a more detailed description of your contributions.}

The major contributions of this paper are:

\begin{itemize}
\item We prove that several commonly used network statistics are \textbf{Density Dependent} and poorly reflect the network behavior if the network size is not constant.
\item We offer alternative statistics that are \textbf{Density Consistent} which measure changes to the distribution of edges regardless of the total number of observed edges.
\item We demonstrate through theory and synthetic trials that anomaly detection tests using Density Consistent statistics are better at identifying when the distribution of edges in a network has changed.
\item We apply anomaly detection tests using both types of statistics to real data to show that Density Consistent statistics recover more major events of the network stream while Density Dependent statistics flag many time steps due to a change in the total edge count rather than an identifiable anomaly.
\item We analyze the subgraphs that changed the most in the anomalous time steps and demonstrate that Density Consistent statistics are better at finding local features which changed radically during the anomaly.
\end{itemize}

\section{Statistic-Based Anomaly \\ Detection}

A statistic-based anomaly detection method is any method which makes its determination of anomalous behavior using network statistics calculated on the graph examples. The actual anomaly detection process can be characterized in the form of a hypothesis test.  The network statistics calculated on examples demonstrating {\em normal} network behavior form the null distribution, while the statistic on the network being tested for anomalies forms the test statistic.  If the test statistic is not likely to have been drawn from the null distribution, we can reject the null hypothesis (that the test network reflects normal behavior) and conclude that it is anomalous.   

Let $G_t =  \{ V, E_t \}$ be a multigraph that represents a dynamic network, where $V$ is the node set and $E_t$ is the set of edges at time $t$, with $e_{ij,t}$  the number of edges between nodes $i$ and $j$ at time $t$.  The edges represent the number of interactions that occurred between the nodes observed within a discrete window of time.  As the number of participating nodes is relatively static compared to the number of communications, we will assume that the node set is a constant; in time steps where a node has no communications we will treat it as being part of the network but having zero edges.

%$L = \{ G_{t_{Lx}} = F(M^{n}_E(t_{Lx}),M^{n}_P(t_{Lx})) \}$ on those time steps can be used as a learning set,

%We will frame the anomaly detection process as a hypothesis test where the null hypothesis is that the network at a test time step exhibits normal behavior.  
Let us define some network statistic $S_k(G_t)$ designed to measure network property $k$ which we will use as the test statistic (e.g., clustering coefficient).  Given some set of time steps $t_{Lx} \in \{ t_{L1},...t_{Lmax} \}$ such that the set of graphs in those times $\{G_{t_{Lx}}\}$ are all examples of {\em normal} network behavior, $S_k$ is calculated on each of these learning set examples to estimate an empirical {\em null} distribution.  If $t_{test}$ is the time step we are testing for abnormality, then the value $S_k(G_{t_{test}})$ is the test statistic.  Given a specified $p$-value (referred to as $\alpha$), we can find threshold(s) that reject $p$ percentage of the null distribution, then draw our conclusion about whether to reject the null hypothesis (conclude an anomaly is present) if the test statistic $S_k(G_{t_{test}})$ falls out of those bounds.  For this work we will use a two-tailed Z-test with a $p$-value of $\alpha = 0.05$ with the thresholds $\phi_{lower}$ and $\phi_{upper}$.  Anomalous test cases where the null hypothesis is rejected correspond to {\em true positives}; normal cases where the null hypothesis is rejected correspond to {\em false positives}.  Likewise anomalous cases where the null is not rejected correspond to {\em false negatives} and normal cases where the null is not rejected correspond to {\em true negatives}. 
% The rate of false positives is often referred to as type I error and is the most critical type of error to minimize.  The rate of true positives is often referred to as the statistical power of the test and while important is usually treated as secondary to the type I error.

Deltacon~\cite{koutra} is an example of the statistic-based approach, as are many others \cite{mcculloh, mcculloh2, priebe, snijders2}. %\jn{Are there really many others? If so, we should cite some of them here.} 
As these models also often incorporate network statistics, we will focus on the statistics themselves in this paper.  
%% JN: Removed this reference since it seems out of context.
%An alternative is a model-based approach which usually makes it determination based on the likelihood of the observed graph given some ``typical'' parameters, e.g., using ERGM or some other network model~\cite{mcculloh}.  
Moreover, not all methods rely solely on statistics calculated from single networks: there are also {\em delta} statistics which measure the difference between two network examples.  Netsimile is an example of such a network comparison statistic~\cite{berlingerio}.  For a dynamic network, these delta statistics are usually calculated between the networks in two consecutive time steps.  In practical use for hypothesis testing, these delta statistics function the same as their single network counterparts.  

If the network properties being tested are not static with respect to the time $t$ this natural evolution may cause $S_k(G_t)$ to change over time regardless of anomalies which makes the null distribution invalid.  It is useful in these instances to replace the statistic with a detrended version $S*_k(G_t) = S_k(G_t) - f(t)$ where the function $f(t)$ is some fit to the original statistic values.  The paper by \cite{tlafond} describes how to do dynamic anomaly detection using a linear detrending function, but other functions can be used for the detrending.  This detrending operation does not change the overall properties of the statistic so for the remainder of the paper assume $S_k(G_t)$ refers to a detrended version, if appropriate.
%The theories of this paper hold for both for both the detrended statistics and the original statistics so detrending can be freely applied if necessary.

\subsection{Common Network Statistics}

Listed here are some of the more commonly used  network statistics for the anomaly detection. % test statistics.

\noindent \textbf{Graph Edit Distance}:\\
The graph edit distance (GED) \cite{gao} is defined as the number of edits that one network needs to undergo to transform into another network and is a measure of how different the network topologies are.  In the context of a dynamic network the GED is applied as a delta measure to report how quickly the network is changing.

\vspace{-6mm}
\begin{small}
\begin{align} \label{eq:ged}
GED(G_{t}) &= | V_{t} | + | V_{t-1} | - 2 * | V_{t} \cap V_{t-1} | \nonumber \\ & + | E_{t} | +  | E_{t-1} | - 2 * | E_{t} \cap E_{t-1} |
\end{align}
\end{small}
\vspace{-6mm}

\noindent \textbf{Degree Distribution Difference}: \\
Define $D_{i,t} = \sum_{j \neq i} e_{ij,t}$ to be the degree of $i$, the total number of messages to or from the node.  The degree distribution is then a histogram of the number of nodes with a particular degree.  Typically real-world network exhibit a power-law degree distribution \cite{faloutsos} but others are possible.  To compare the degree distributions of $t$ and $t-1$, one option is to take the squared difference between the degree counts for each possible degree.  We will call this the degree distribution difference (DD):
% of the two time steps and its formula is:

\vspace{-5mm}
\begin{small}
\begin{align} \label{eq:dd}
DD(G_t) = \sum_{k=1}^{max_i(D_{i,t}, D_{i,t-1})} (\sum_i \mathbb I(D_{i,t} = k) - \mathbb I(D_{i,t-1} = k) )^2
\end{align}
\end{small}
\vspace{-5mm}

Other measures can be used to compare the two distributions but the statistical properties of the squared distance described later extend to other distance measures.

\noindent \textbf{Weighted Clustering Coefficient}:\\
Clustering coefficient is intended as a measure of the transitivity of the network.
%, i.e. the tendency for nodes to form triangular relationships.  
It is a critical property of many social networks as the odds of interacting with friends of your friends often lead to these triangular relationships.  As the standard clustering coefficient is not designed for weighted graphs we will be analyzing a weighted clustering coefficient, specifically the Barrat weighted clustering coefficient (CB)\cite{saramaki}:
%.  The Barrat weighted clustering coefficient is defined as 

\vspace{-5mm}
\begin{small}
\begin{align} \label{eq:wcc}
CB(G_t) = \sum_{i} \frac{1}{N*(k_i-1)*s_i} \sum_{j,k} \frac{e_{i,j}+e_{i,k}}{2} a_{i,j}*a_{i,k}*a_{j,k} 
\end{align}
\end{small}
\vspace{-5mm}

\noindent Where $a_{i,j} = I[e_{i,j} > 0]$, $s_i = \sum_j e_{i,j}$, and $k_i = \sum_j a_{i,j}$.  Other weighted clustering coefficients exists but they behave similarly to the Barrat coefficient.

\section{Density Dependence}

To illustrate why dependency of the network statistic on the edge count affects the conclusion of hypothesis tests, we will us first investigate statistics that are {\em density dependent}.

\begin{definition}
A statistic $S$ is \textbf{density dependent} if the value of $S(G_t)$ is dependent on the density of $G_t$ (i.e., $| E_t |$).  
\end{definition}
\begin{theorem} \textbf{False Positives for Density Dependent Statistics} \\
Let $L \!=\! \{ G_{t_{Lx}} \}$ be a learning set of graphs and $G_{t_{test}}$ be the test graph.  If $S_k(G_t)$ is monotonically dependent on $| E_t |$ and $\{ | E_{t_{Lx}} | \}$ is bounded by finite $E_{lower}$ and $E_{upper}$, there is some $| E_{t_{test}} |$ that will cause the test case to be rejected regardless of whether the network is an example of an anomaly with respect to property $k$. 
%\footnote{Proofs for all theorems can be found in the supplementary materials.}
% with a probability greater than $\alpha$.
%Let the learning set $L = \{ G_{t_{Lx}} = F(M^{n}_E(t_{Lx}),M^{n}_P(t_{Lx})) \}$ for $t_{Lx} \in \{ t_{L1},...t_{Lmax} \}$ and test case $G_{t_{test}} = F(M^a_E(t_{test}), M^{n}_P(t_{test}))$ for a constant $t_{test}$.  If $S(G_{t_{test}}) \rightarrow \pm \infty$ as $M^a_E(t_{test}) \rightarrow \infty$, then $\exists \: M^a_E(t_{test})$ s.t. a hypothesis test for anomalous $\theta_k$ using $L$ as the learning set and $G_{t_{test}}$ as the test case will falsely reject the test case.% with a probability greater than $\alpha$.
%If changing $| E_t |$ by $\Delta_E$ 
%If $S_k(G_{t}) \uparrow$ as $| E_t | \uparrow$, $\exists | E_{t_{test}} | $ s.t. for $G_{t_{test}} = F(M^{a}_E(t), M^{n}_P(t))$, a hypothesis test for $\theta_k$ anomalies using $R_k(t) =  S_k(G_t) + s^*_k(t)$ as a test statistic cannot be guaranteed to produce false positives at a rate below $\alpha$ if $ | E_{t_{test}} | $ is unbounded. %will reject $G_{t_{test}}$ resulting in a false positive due to the confounding effect of the edge count.
\label{falsepos}
\end{theorem}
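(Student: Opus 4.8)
The plan is to exploit the tension between the finiteness of the rejection thresholds, which are pinned down by the bounded learning set, and the unbounded reach of the test statistic under monotone density dependence. Without loss of generality I would assume $S_k$ is monotonically increasing in $|E_t|$; the decreasing case is identical after exchanging the roles of $\phi_{upper}$ and $\phi_{lower}$, and both tails are available to me because the detection rule is a two-tailed $Z$-test.

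First I would pin down the null distribution. Since every learning graph satisfies $E_{lower} \le |E_{t_{Lx}}| \le E_{upper}$ and $S_k$ is monotone in edge count, each learning value $S_k(G_{t_{Lx}})$ lies in the finite interval spanned by the statistic evaluated at $E_{lower}$ and at $E_{upper}$. Consequently the empirical mean $\mu$ and standard deviation $\sigma$ of the null distribution are finite, so the two-tailed thresholds
\begin{small}
\begin{align*}
\phi_{lower} = \mu - z_{\alpha/2}\,\sigma, \qquad \phi_{upper} = \mu + z_{\alpha/2}\,\sigma
\end{align*}
\end{small}
are finite constants that do not depend on the test graph.

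Next I would drive the test statistic past $\phi_{upper}$ using edge count alone. By monotone density dependence, increasing $|E_{t_{test}}|$ strictly increases $S_k(G_{t_{test}})$, and for the density-dependent statistics treated earlier this growth is unbounded; hence there exists a finite $|E_{t_{test}}|$ with $S_k(G_{t_{test}}) > \phi_{upper}$, and the null hypothesis is rejected. The decisive point for the claim is that this choice is made purely on the basis of $|E_{t_{test}}|$ and is therefore decoupled from property $k$: the same inflated edge count forces rejection even when $G_{t_{test}}$ is a perfectly normal example with respect to $k$, yielding a false positive. This exhibits an $|E_{t_{test}}|$ that rejects regardless of the true anomaly status, as required.

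The main obstacle I anticipate is the step asserting that monotone dependence actually lets $S_k$ clear the fixed threshold, since a monotone function need not be unbounded (it could saturate at a supremum below $\phi_{upper}$). I would close this gap by invoking the concrete density dependence of the statistics analyzed earlier, where GED, DD, and CB each grow without bound as $|E_t|$ increases, so that the achievable range of $S_k$ strictly contains the finite null range and extends beyond $\mu + z_{\alpha/2}\,\sigma$. Equivalently, one may read \emph{monotonically dependent} as monotone and able to exceed every value attained on $[E_{lower}, E_{upper}]$ by a margin wider than the threshold band, which is precisely the regime in which density dependence corrupts the hypothesis test.
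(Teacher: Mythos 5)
Your proposal is correct and follows essentially the same route as the paper's proof: the bounded learning-set edge counts pin the thresholds to finite values, and then a sufficiently extreme $|E_{t_{test}}|$ pushes the monotone, divergent statistic past a threshold irrespective of property $k$. The gap you flag about monotone-but-saturating statistics is real, and you resolve it the same way the paper does, which is by silently strengthening ``monotonically dependent'' to ``monotonic and divergent'' in the first line of the argument.
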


%\subsection*{Proof of Theorem \ref{falsepos}}
\begin{proof}
%For any $ E_{t_{Lx}} $ in the learning set, let $E_{lower}$ be the edge set with minimal number of edges and $E_{upper}$ be the set with maximal number of edges.  
Let $S_k(G)$ be a network statistic that is a monotonic and divergent function with respect to the number of edges in $G$.  Given a set of learning graphs $\{G_{t_{L}}\}$ the values of $S_k(\{G_{t_{L}}\})$ are bounded by $max(S_k(\{G_{t_{L}}\}))$ and $min(S_k(\{G_{t_{L}}\}))$, so the critical points $\phi_{lower}$ and $\phi_{upper}$ of a hypothesis test using this learning set will be within these bounds.  Since an increasing $| E_{t_{test}} | $ implies $S_k(G_{t_{test}})$ increases or decreases, then there exists a $| E_{t_{test}} |$ such that $S_k(G_{t_{test}})$ is not within $\phi_{lower}$ and $\phi_{upper}$ and will be rejected by the test.
\end{proof}

If changing the number of edges in an observed network changes the output of the statistic, then if the test network differs sufficiently in its number of edges compared to the learning examples the null hypothesis will be rejected regardless of the other network properties.  As for why it is not sufficient to simply label these times as anomalies (due to unusual edge volume) the hypothesis test is designed to test for abnormality in a specific network property.  If edge count anomalies also flag anomalies on other network properties, we cannot disambiguate the case where both are anomalous or just the message volume.  If just the message volume is unusual, this might simply be an example of an exceptionally busy day where the pattern of communication is roughly the same just elevated.  This is a very different case from where both the volume and the distribution of edges are unusual.  

A second problem occurs when the edge counts in the learning set have high variance.  If the statistic is dependent on the number of edges, noise in the edge counts translates to noise in the statistic values which lowers the statistical power of the test.  %This noise translates to noise in the network statistic, which can drown out the signal of anomalies.  

%\begin{theorem}
%Let $M^{const}_E(t)$ produce $c_E$ edges for all $t$.  Let a hypothesis test using the learning set $L_{const} = \{ G^{const}_{Lx} = F(M^{const}_E(t_{Lx}), M^{n}_P(t_{Lx})) \}$ for $t_{Lx} \in \{ t_{L1},...t_{Lmax} \}$ accurately reject $G_{t_{test}} = F(M^{const}_E(t_{test}), M^{a_k}_P(t_{test}))$.  If $S_k(G_{t_{test}}) \rightarrow \pm \infty$ as $| E_{t_{test}} | \rightarrow \infty$, for some $M^{n}_E(t)$ edge count function with random noise $\epsilon_E$ a test using $L_{n} = \{ G_{Lx} = F(M^{n}_E(t_{Lx}), M^{n}_P(t_{Lx})) \}$ learning set on $G_{t_{test}} = F(M^{n}_E(t_{test}), M^{a_k}_P(t_{test}))$ test case will falsely accept the time step with probability $1-\alpha$ for some $\epsilon_E$.
%Let the edge count function $M^{n}_E(t)$ be represented as $M'_E(t) + \epsilon_E$ where $\epsilon_E$ is a zero-mean statistical noise component.  
%Let the learning set graphs $G_{Lx} \sim F(M^{n}_E(t_{Lx}),M^{n}_P(t_{Lx}))$ and the test graph $G_{t_{test}} \sim F(M^{n}_E(t_{Lx}),M^{a_k}_P(t_{Lx}))$.  If $S_k(G_t)$ is a non-negative real number that diverges with increasing $| E_t |$ then as $Var(M^{n}_E(t_{Lx})) \rightarrow \infty$ the false negative rate approaches 1 for finite $S_k(G_{t_{test}})$.
\begin{theorem} \textbf{False Negatives for Density Dependent Statistics} \\
For any $S_k(G_{t_{test}})$ calculated on a network that is anomalous with respect to property $k$, if $S_k(G_t)$ is dependent on $| E_t |$ there is some value of the variance of the learning network edge counts such that $S_k(G_{t_{test}})$ is not detected as an anomaly.
%If $S_k(t)$ diverges with increasing $| E_t |$ then as $Var(M^{n}_E(t)) \rightarrow \infty$ the false negative rate approaches $1-\alpha$.
\label{falseneg}
\end{theorem}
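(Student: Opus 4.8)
The plan is to exploit the fact that the acceptance region of the two-tailed Z-test is determined by the mean and standard deviation of the null (learning-set) statistic values, and to show that inflating the edge-count variance widens this region until it engulfs the fixed test value. Write $\mu_L$ and $\sigma_L$ for the mean and standard deviation of the set $\{S_k(G_{t_{Lx}})\}$; the test declares the graph \emph{normal} precisely when $S_k(G_{t_{test}}) \in [\phi_{lower}, \phi_{upper}]$ with $\phi_{lower} = \mu_L - z_{\alpha/2}\,\sigma_L$ and $\phi_{upper} = \mu_L + z_{\alpha/2}\,\sigma_L$ for the critical value $z_{\alpha/2}$ associated with $\alpha = 0.05$. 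Since the test graph is fixed, the test statistic $s^{*} := S_k(G_{t_{test}})$ is a single finite number, independent of anything we do to the learning set.

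The core of the argument is to drive $\sigma_L$ to infinity by manipulating only the variance of the learning-set edge counts. Because $S_k$ is density dependent — taken here, as in the development preceding this theorem, to be a non-constant and in the natural case monotonic, divergent function of $|E_t|$ — spreading the edge counts $\{|E_{t_{Lx}}|\}$ apart spreads the corresponding statistic values apart as well. First I would construct a one-parameter family of learning sets indexed by a spread parameter $r$: for instance a balanced bimodal collection in which half the graphs carry few edges and half carry many, with the gap between the two groups growing in $r$. Monotonic divergence of $S_k$ then forces the two clusters of statistic values to separate without bound, so $\sigma_L(r) \to \infty$ as $r \to \infty$ while the symmetry of the two-group design keeps the mean $\mu_L(r)$ bounded near the midpoint. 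Consequently the acceptance interval has width $2 z_{\alpha/2}\,\sigma_L(r) \to \infty$ and, with its center controlled, eventually contains the fixed point $s^{*}$. For all sufficiently large $r$ we have $s^{*} \in [\phi_{lower}, \phi_{upper}]$, so the anomalous test graph is \emph{not} rejected, which is exactly a false negative.

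The step I expect to be the main obstacle is ensuring that the acceptance interval genuinely captures $s^{*}$ rather than drifting away from it as the variance grows. Increasing $\sigma_L$ is only useful if the center $\mu_L$ does not run off at the same or a faster rate; because $S_k$ is nonlinear in $|E_t|$, a naive symmetric spread of the edge counts need not produce a symmetric spread of statistic values, so $\mu_L$ can drift. The remedy is to choose the spreading family so that the interval center stays bounded, or at least grows strictly slower than $\sigma_L$ — the balanced two-group construction accomplishes this, since the average of its two statistic clusters remains between them while their separation, and hence $\sigma_L$, diverges. Once that control is in place, the required inequalities $\phi_{lower} \le s^{*} \le \phi_{upper}$ follow for large $r$ by a direct comparison of the unbounded half-width $z_{\alpha/2}\,\sigma_L(r)$ against the bounded quantity $|s^{*} - \mu_L(r)|$.
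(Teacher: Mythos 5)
Your proposal is correct and takes essentially the same route as the paper's proof: inflate the variance of the learning-set edge counts so that, via the density dependence of $S_k$, the spread of the null statistic values and hence the width of the acceptance interval $[\phi_{lower},\phi_{upper}]$ grows until it captures the fixed test value $S_k(G_{t_{test}})$. Your balanced-bimodal construction addressing the possible drift of $\mu_L$ relative to $\sigma_L$ is in fact more careful than the paper's argument, which simply asserts that the widening thresholds eventually contain the test statistic.
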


%\subsection*{Proof of Theorem \ref{falseneg}}
\begin{proof}
Let $S_k(G_{t_{test}})$ be the test statistic and $\{G_{t_{L}}\}$ be the set of learning graphs where the edge count of any learning graph $ | E_{t_{Lx}} |$ be drawn according to distribution $\Phi_E$.  If $S_k(G_t)$ is a monotonic divergent function of $| E_t |$ then as the variance of $\Phi_E$ increases the variance of $S_k(G_{t_{Lx}})$ increases as well.  For a given $\alpha$, the hypothesis test thresholds $\phi_{lower}$ and $\phi_{upper}$ will widen as the variance increases to incorporate $1 - \alpha$ learning set instances.  Therefore, for a given $S_k(G_{t_{test}})$ there is some value of $var(\Phi_E)$ such that  $\phi_{lower} < S_k(G_{t_{test}}) < \phi_{upper}$.
\end{proof}

\noindent With a sufficient amount of edge count noise, the statistical power of the anomaly detector drops to zero.  

These theorems have been defined using a statistic calculated on a single network, but some statistics are delta measures which are measured on two networks.  In these cases, the edge counts of either or both of the networks can cause edge dependency issues.

\begin{lemma}
If a delta statistic $S_k(G_t, G_{t-1})$ is dependent on either $E_{min}(G_t) \!=\! min(| E_t |, | E_{t-1} |)$ or $E_{\Delta}(G_t) \!=\! abs(| E_t | \!-\! | E_{t-1} |)$ then Thms \ref{falsepos}-\ref{falseneg} apply to the delta statistic.
\label{deltalemma}
%  ( $E_{min}(t)$, $E_{\Delta}(t) \uparrow $ implies $S_k(G_t, G_{t-1}) \uparrow$) and $E_{min}(t)$ and $E_{\Delta}(t)$ are unbounded, then the previous theorems \ref{falsepos} and \ref{falseneg} apply to these measures and $E_{min}(t)$ and $E_{\Delta}(t)$ can produce false positives or negatives when testing for property $\theta_k$.
\end{lemma}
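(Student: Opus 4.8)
The plan is to prove this by \emph{reduction}: the arguments establishing Theorems~\ref{falsepos} and~\ref{falseneg} never use any structural property of $|E_t|$ beyond (i) that $S_k$ is a monotonic, divergent function of that scalar, and (ii) that the scalar takes bounded values over the learning set (for the false-positive case) and controllable variance (for the false-negative case). So I would introduce a single composite variable $E^*(G_t)$ standing for whichever quantity the hypothesis names --- $E_{min}(G_t)=\min(|E_t|,|E_{t-1}|)$ or $E_{\Delta}(G_t)=abs(|E_t|-|E_{t-1}|)$ --- and argue that the delta statistic $S_k(G_t,G_{t-1})$, now viewed as a monotonic divergent function of $E^*$, inherits both properties. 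Once that is established, the two theorems transcribe almost verbatim with $E^*$ substituted for $|E_t|$.

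First I would verify the boundedness needed for the false-positive argument (Theorem~\ref{falsepos}). Since the learning edge counts $\{|E_{t_{Lx}}|\}$ lie in $[E_{lower},E_{upper}]$, both the $\min$ and the absolute difference of any two such counts also lie in a bounded interval, so $\{E^*(G_{t_{Lx}})\}$ is bounded and the thresholds $\phi_{lower},\phi_{upper}$ remain finite. I would then exhibit a test graph whose edge count drives $E^*(G_{t_{test}})$ to the extreme at which the monotone divergent $S_k$ escapes $[\phi_{lower},\phi_{upper}]$: in the $E_{\Delta}$ case, sending $|E_{t_{test}}|\to\infty$ makes $E_{\Delta}\to\infty$; in the $E_{min}$ case, driving $|E_{t_{test}}|$ to the extreme permitted by its neighbor moves $E_{min}$ to the boundary of its achievable range. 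In either case the statistic leaves the rejection-free interval, forcing a false positive independent of property $k$.

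Next I would handle the false-negative argument (Theorem~\ref{falseneg}) the same way: if the learning edge counts are drawn with increasing variance, that variance propagates through $\min$ or absolute difference into variance of $E^*$, and through the monotone dependence into variance of $S_k(G_{t_{Lx}},G_{t_{Lx}-1})$; widening this variance then widens $\phi_{lower},\phi_{upper}$ until they bracket any fixed $S_k(G_{t_{test}},G_{t_{test}-1})$, suppressing detection. This is exactly the chain of implications used for the single-network statistic, now routed through $E^*$.

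I expect the main obstacle to be the $E_{min}$ case, because $\min$ both saturates and can damp variance. $E_{min}$ is bounded above by the smaller neighbor's edge count, so it cannot be driven to $+\infty$ by the test graph alone, and if one of the two time steps reliably has few edges the $\min$ becomes insensitive to the other, weakening both the divergence step and the variance-propagation step. I would address this by assuming the stated dependence on $E^*$ is non-degenerate over the operative range of edge counts --- i.e.\ the statistic genuinely varies monotonically as $E^*$ moves within its achievable interval --- and by noting that in a dynamic stream the test graph's own edge count is free to move $E^*$ toward at least one extreme, which is all the divergence and threshold-widening arguments actually require.
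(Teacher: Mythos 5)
Your proposal is correct and follows essentially the same route as the paper: the paper's own proof is a two-sentence reduction observing that if the delta statistic depends on $E_\Delta$ then varying either $|E_t|$ or $|E_{t-1}|$ reproduces the failure modes of Theorems~\ref{falsepos} and~\ref{falseneg}, and if it depends on $E_{min}$ then moving both edge counts together does the same. Your version is in fact more careful than the paper's --- you explicitly verify boundedness of $E^*$ over the learning set and flag the saturation of $\min$, which the paper resolves only implicitly by requiring both edge counts to move in tandem --- but the underlying argument is the same substitution of $E_{min}$ or $E_\Delta$ for $|E_t|$ in the existing theorems.
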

\vspace{-4mm}

%\subsection*{Proof of Lemma \ref{deltalemma}}
\begin{proof}
%\begin{align*}
%& S_\Delta(G_t, G_{t-1}) | E_{min}, E_{\Delta} \\
%&\exists | E_{t_{test}} |, | E_{t_{test}-1} |) \: s.t. \: S_k( G_{t_{test}}, G_{t_{test}-1} ) \\ &> \phi_{upper}  \cup S_k( G_{t_{test}}, G_{t_{test}-1} ) < \phi_{lower} \\
%&\lim_{Var(E(t_{Lx})) \rightarrow \infty} P[\phi_{lower} < \\  &S_k(G_{t_{test}}, G_{t_{test}-1} )  < \phi_{upper}] = 1 \\
%\end{align*}
For some delta statistic $S_k(G_t, G_{t-1})$ if the statistic is dependent on $| E_t |$, $| E_{t-1} |$, or both, Theorems \ref{falsepos} and \ref{falseneg} apply to any edge count which influences the statistic.  If $S_k(G_t, G_{t-1})$ depends on $E_\Delta = abs(| E_t | - | E_{t-1} |)$ then as either $| E_t |$ or $| E_{t-1} |$ change the statistic produced changes leading to the problems described in theorems \ref{falsepos} and \ref{falseneg}.  If $S_k(G_t, G_{t-1})$ depends on $E_{min} = min(| E_t |, | E_{t-1} |)$ then if both $| E_t |$ and $| E_{t-1} |$ increase or decrease the statistic is affected leading to the same types of errors.
\end{proof}

These theorems show that dependency on edge counts can lead to both false positives and false negatives from anomaly detectors that look for unusual network properties other than edge count.  In order to distinguish between the observed edge counts in each time step and the other network properties, we need a more specific data model to represent the network generation process with components for each.

%\section{Data Model}  

%As the true $P_t$ is unknown, we instead rely on the empirical estimate $\hat{P}_t$ where each entry in $\hat{P}_t$ is the number of messages between a node pair normalized by the total volume of messages. 

%Two components: random variable $E_t$ which is edge count over time, and distribution of edges $P_t$ at particular time.  We represent $P_t$ as a $NxN$ matrix where each cell is the probability of observing a communication between a particular node pair when any new message appears.  We assume these are independent probabilities: the observed graph is a sample of $E_t$ edges taken with independent probabilities $P_t$.  Any network statistics (other than raw edge count) should measure some property of $P_t$ like transitivity, community structure, degree distribution shape, etc.  On normal time steps, $E_t$ and $P_t$ are drawn from non-anomalous distributions; on anomalous time steps one or the other is drawn from an anomalous distribution.  The goal is to identify when time steps are drawn from anomalous distributions, and if so in what way they appear anomalous.

\section{Density Independence} %Types of consistency for different statistics, their implications in proper hypothesis tests

%To define statistics that are not density dependent, we will consider the following data model. \jn{Need better connector to improve flow.}

Now that we have established the problems with density dependence, we need to define the properties that we would prefer our network statistics to have.  To do this we need a more detailed model of how the graph examples were generated.

\subsection{Data Model}

Let the number of edges in any time step $| E_t |$ be a random variable drawn from distribution $M^{n}_E(t)$ in times where there is a normal message volume and distribution $M^{a}_E(t)$ in times where there is anomalous message volume.  Now let the distribution of edges amongst the nodes of the graph be represented by a $N \times N$ matrix $P_t$ where the value of any cell $p_{ij,t}$ is the probability of observing a message between two particular node pairs at a particular time.  This is a probability distribution so the total matrix mass sums to 1.  Like edge count, treat this matrix as drawn from distribution $M^{n}_P(t)$ in normal times and $M^{a_k}_P(t)$ in anomalous times where $k$ is the network parameter that is anomalous (for example, an atypical degree distribution).  Any observed network slice can be treated as having been generated by a multinomial sampling process where $| E_t |$ edges are selected independently from $NxN$ with probabilities $P_t$.  Denote the sampling procedure for a graph $G_t$ with $F(| E_t |, P_t)$.  In the next section, we will detail how this decomposition into the count of edges and their distribution allows us to define statistics which are not sensitive to the number of edges in the network.

%A case related to density divergence is where the statistic converges to some constant as the edge count increases, which we will call \textbf{Convergence to Constant}.  Given enough observed edges in both the learning and test networks the rate of false positives will decrease to $\alpha$ as all networks will produce the same constant value, as described in theorem \ref{convergetrueneg}.  However, this is still not a very useful statistic as anomalous networks will also produce the same constant value, resulting in no statistical power.  

\subsection{Density Consistency and Unbiasedness}
In the above data model, $P_t$ is the distribution of edges in the network, thus any property of the network aside from the volume of edges is encapsulated by the $P_t$ matrix.  Therefore, a network statistic designed to capture some network property other than edge count should be a function of $P_t$ alone.  

Let $S_k(P_t)$ be some test statistic designed to capture a network property $k$ of $P_t$, that is independent of the density of $G_t$.  Since $P_t$ is not directly observable we can estimate the statistic with the empirical statistic $\hat{S}_k(G_t)$ where $G_t$ is used to estimate $P_t$, with $\hat{p}_{ij,t} = \frac{e_{ij,t}}{| E_t |}$.  

\vspace{-3mm}
\begin{definition}
A statistic $S$ is \textbf{density consistent} if $\hat{S}(G_t)$ is a consistent estimator of $S(P_t)$.  
\end{definition}
\vspace{-3mm}

If $\hat{S}_k(G_t)$ is a consistent estimator of the true value of $S_k(P_t)$, then observing more edge examples should cause the estimated statistics to converge to the true value given $P_t$.  More specifically it is asymptotically consistent with respect to the true value as the number of observed edges increases.  Another way to describe this property is that $\hat{S}_k(G_t)$ has some bias term dependent on the edge count $| E_t |$, but the bias converges to zero as the edge count increases: $\lim_{| E_t | \rightarrow \infty} \hat{S}_k(G_t) - S_k(P_t) = 0$.  Density consistent statistics allow us to perform accurate hypothesis tests as long as a sufficient number of edges are observed in the networks.  To begin, we will prove that the rate of false positives does not exceed the selected p-value $\alpha$.

\vspace{-1mm}
%\begin{theorem}
\begin{theorem} \textbf{False Positive Rate for Density Consistent Statistics} \\
As $| E_t |\rightarrow \infty$ if $\hat{S}_k(G_t \!=\! F(| E_t |, P_t \sim M^{n}_P(t) ) $ converges to $S_k(P_t)$, the probability of a false positive when testing a time with an edge count anomaly $G_{t_{test}} \!=\!  F(| E_t |, P_t \sim M^{n}_P(t) )$ approaches $\alpha$.
%As $| E_t | \uparrow$ if $S_k(G_t = F(M_E(t), M^{n}_P(t)) $ converges in probability to constant $c_k$, the probability of a false positive when testing a time with an edge count anomaly $P_{reject}(G_{t_{test}} =  F(M^{a}_E(t), M^{n}_P(t)))$ approaches $\alpha$.
\label{convergetrueneg}
\end{theorem}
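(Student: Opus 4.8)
The plan is to show that in the infinite-edge limit the edge-count anomaly becomes invisible to a density-consistent statistic, so that the only rejections are the $\alpha$ fraction the test is designed to produce. I would begin by invoking density consistency directly: by hypothesis $\hat{S}_k(G_t) \xrightarrow{p} S_k(P_t)$ as $|E_t| \to \infty$, and this holds for every graph, learning or test, irrespective of how its edge count was generated. In particular the test graph's anomalous edge count plays no role in the limit, since consistency is a statement about convergence under any growing edge budget; all of the multinomial sampling randomness $F(|E_t|, P_t)$ is thereby driven out, leaving only the randomness in the draws of $P_t$.

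Next, observe that both the learning matrices $\{P_{t_{Lx}}\}$ and the test matrix $P_{t_{test}}$ are drawn from the same normal-behavior distribution $M^{n}_P(t)$, so the population values $\{S_k(P_{t_{Lx}})\}$ and $S_k(P_{t_{test}})$ are i.i.d. draws from a single distribution $\Psi$, namely the law of $S_k(P)$ when $P \sim M^{n}_P(t)$; crucially $\Psi$ carries no dependence on any edge count. Combining with the previous step, the empirical null sample $\{\hat{S}_k(G_{t_{Lx}})\}$ converges to an i.i.d. sample from $\Psi$, so by consistency of the sample mean and variance the estimated thresholds satisfy $\phi_{lower} \to \mu - z_{\alpha/2}\sigma$ and $\phi_{upper} \to \mu + z_{\alpha/2}\sigma$, where $\mu$ and $\sigma$ are the mean and standard deviation of $\Psi$ and $z_{\alpha/2}$ is the two-tailed critical value.

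I would then assemble the false-positive probability $\Pr(\hat{S}_k(G_{t_{test}}) < \phi_{lower}) + \Pr(\hat{S}_k(G_{t_{test}}) > \phi_{upper})$. Since the test statistic converges to $S_k(P_{t_{test}}) \sim \Psi$ and is independent of the learning set, a Slutsky/continuous-mapping argument lets the two random thresholds be replaced by their deterministic limits, and the false-positive probability converges to $\Pr_{S \sim \Psi}(S \notin [\mu - z_{\alpha/2}\sigma,\, \mu + z_{\alpha/2}\sigma])$, which equals $\alpha$ under the normality assumption built into the two-tailed Z-test.

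The main obstacle I anticipate is the joint limit in the last step: both the thresholds and the test statistic are random and must be shown to converge together. The clean route is to exploit independence of the test graph from the learning set and apply Slutsky's theorem, but this requires $\Psi$ to place no atom at the limiting thresholds so that the boundary event varies continuously in the limiting arguments; I would also state plainly that the final equality with $\alpha$ inherits the Z-test's normality assumption, and that the rigorous claim is convergence of the false-positive rate to $\alpha$ rather than exact equality at any finite edge count.
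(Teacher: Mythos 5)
Your proposal is correct and follows essentially the same route as the paper's proof: density consistency drives out the multinomial sampling noise, both the learning instances and the test instance converge to the same distribution $S_k(P \sim M^{n}_P(t))$ since only the edge count (not $P$) is anomalous, and hence the test rejects the test instance exactly as often as it would reject a null instance, i.e.\ with probability $\alpha$. The paper states this last step in one sentence ("any threshold is as likely to reject a learning set instance as the test instance"), whereas you fill in the threshold convergence, the Slutsky argument, and the normality and no-atom caveats --- useful added rigor, but not a different argument.
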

\vspace{-3mm}

%\subsection*{Proof of Theorem \ref{convergetrueneg}}

\begin{proof}
%\begin{align*}
%%\lim_{| E_t | \rightarrow \infty} &P( abs(S_k(G_t = F(M^{n}_E(t), M^{n}_P(t))\\ &- c_k) < \epsilon) \rightarrow 1 \\
%\forall t_{Lx} \in \{L\}, G_{t_{Lx}} & = F(| E_{t_{Lx}} | \sim M^{n}_E({t_{Lx}}), P_{t_{Lx}} \sim M^{n}_P({t_{Lx}})) \\
%\lim_{| E_{t_{Lx}} | \rightarrow \infty} & \hat{S}_k(G_{t_{Lx}})  = S_k(P_{t_{Lx}}) \in [\epsilon^{n}_{lower}, \epsilon^{n}_{upper}]  \\
%%\forall t_{Lx} \in L, &P( S_k(t_{Lx}) < \phi_{lower} \cup S_k(t_{Lx}) > \phi_{upper} ) = \alpha \\
%%lim_{| E_t | \rightarrow \infty} \phi_{lower} > c_k -\epsilon
%G_{t_{test}}  &= F(| E_{t_{test}} | \sim M^{a}_E({t_{test}}), P_{t_{test}} \sim M^{n}_P(t_{test})) \\
%\lim_{| E_{t_{test}} | \rightarrow \infty} &  \hat{S}_k(G_{t_{test}}) = S_k(P_{t_{test}}) \in [\epsilon^{n}_{lower}, \epsilon^{n}_{upper}]  \\
%\lim_{| E_{t_{test}} | \rightarrow \infty} & P( \hat{S}_k(G_{t_{test}}) < \phi_{lower} \cup \hat{S}_k(G_{t_{test}}) > \phi_{upper} ) \\ = \lim_{| E_{t_{Lx}} | \rightarrow \infty}& P( \hat{S}_k(G_{t_{Lx}}) < \phi_{lower} \cup \hat{S}_k(G_{t_{Lx}}) > \phi_{upper} ) \\ &= \alpha \\
%\end{align*}
%%\end{proof} 
Let all learning set graphs $G_{t_{Lx}} = F(| E_{t_{Lx}} | \sim M^{n}_E({t_{Lx}}), P_{t_{Lx}} \sim M^{n}_P({t_{Lx}}))$ be drawn from non-anomalous $| E |$ and $P$ distributions and the test instance $G_{t_{test}} \break = F(| E_{t_{test}} | \sim M^{a}_E({t_{test}}), P_{t_{test}} \sim M^{n}_P({t_{test}}))$ be drawn from an anomalous $| E |$ distribution but a non-anomalous $P$ distribution.  If $\hat{S}_k(G_{t})$ is a consistent estimator of $S_k(P_{t})$, $\lim_{| E_{t} | \rightarrow \infty}  \hat{S}_k(G_{t}) = S_k(P_{t})$.  Then as both $| E_{t_{Lx}} |$ and $| E_{t_{test}} |$ increase, all learning set instances and the test set instance approach the distribution $S_k(P_{t} \sim M^{n}_P({t}))$.  As any threshold is as likely to reject a learning set instance as the test instance, the false positive rate approaches $\alpha$.
\end{proof}
\vspace{-2mm}

As the bias converges to zero, graphs created with the same underlying properties will produce statistic values within the same distribution, making the test case come from the same distribution as the null.  Even if the test case has an unusual number of edges, as long as the number of edges is not too small there will not be a false positive.  Density consistency is also beneficial in the case of false negatives.

%\vspace{-2mm}
%\begin{theorem}
\begin{theorem} \textbf{False Negative Rate for Density Consistent Statistics} \\
As $| E_t | \!\rightarrow\! \infty$ let $\hat{S}_k(G_t \!=\! F(| E_t |, P^{n}_t \!\sim\! M^{n}_P(t))) $ converge to $S_k(P^{n}_t)$  and $\hat{S}_k(G_t \!=\! F(| E_t |, P^{a}_t \sim M^{a}_P(t)) $ converge to $S_k(P^{a}_t)$.  If $S_k(P^{n}_t)$ and $S_k(P^{a}_t)$ are separable then the probability of a false negative $\rightarrow$ 0 as $| E_t |$ increases.
\label{convergetruepos}
\end{theorem}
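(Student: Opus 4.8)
The plan is to mirror the proof of Theorem~\ref{convergetrueneg}, except that now I track how the learning-set thresholds and the test statistic concentrate on two \emph{separated} regions rather than the same one. A false negative is exactly the event that the anomalous test instance is \emph{not} rejected, i.e. $\phi_{lower} < \hat{S}_k(G_{t_{test}}) < \phi_{upper}$, so it suffices to show $P(\phi_{lower} < \hat{S}_k(G_{t_{test}}) < \phi_{upper}) \to 0$ as $|E_t| \to \infty$. The idea is that the thresholds are pinned to the null distribution, which collapses onto the values $S_k(P^n_t)$, while the test statistic collapses onto $S_k(P^a_t)$; separability then forces the test statistic out of the acceptance interval.

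First I would handle the null side. By the first convergence hypothesis, as $|E_{t_{Lx}}| \to \infty$ each learning statistic $\hat{S}_k(G_{t_{Lx}})$ converges in probability to $S_k(P^n_{t_{Lx}})$, so the empirical null distribution converges to the law of $S_k(P^n_t)$ induced by $M^{n}_P(t)$. Because $\phi_{lower}$ and $\phi_{upper}$ are chosen (via the two-tailed Z-test) to retain a $1-\alpha$ fraction of this empirical distribution, they converge to the corresponding critical points of $S_k(P^n_t)$; the acceptance interval $[\phi_{lower}, \phi_{upper}]$ is therefore eventually a bounded neighborhood determined solely by the null statistic values.

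Then I would handle the test side. By the second convergence hypothesis the test statistic $\hat{S}_k(G_{t_{test}})$ converges to $S_k(P^a_t)$. I read separability of $S_k(P^n_t)$ and $S_k(P^a_t)$ as the two induced distributions having supports separated by a positive gap $\delta$, so the limiting test value lies at distance at least $\delta$ from the region the acceptance interval can occupy. Choosing $|E_t|$ large enough that both the threshold margin and the test-statistic fluctuation fall below $\delta/2$ then forces $\hat{S}_k(G_{t_{test}}) \notin [\phi_{lower}, \phi_{upper}]$ with probability approaching one, so the false negative probability $\to 0$.

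The main obstacle is making the two convergences cooperate, since both the thresholds (which depend on the finite learning sample as well as on the edge count) and the test statistic are random and cannot simply be replaced by deterministic limits. The careful step is a joint $\epsilon$--$\delta$ argument: fix $\epsilon < \delta/2$ and choose the edge count large enough that, with probability approaching one, simultaneously (i) the empirical null critical points are within $\epsilon$ of their limits under $S_k(P^n_t)$, and (ii) the test statistic is within $\epsilon$ of $S_k(P^a_t)$; the separation gap then does the rest. A secondary subtlety is pinning down the exact meaning of ``separable,'' and I would state the result under the clean sufficient condition that the two limiting statistic distributions have disjoint (margin-separated) supports.
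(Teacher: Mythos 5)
Your proposal is correct and follows essentially the same route as the paper's proof: both arguments let the learning-set statistics collapse onto the distribution of $S_k(P^n_t)$ (pinning the acceptance thresholds) while the test statistic collapses onto $S_k(P^a_t)$, and then invoke separability to conclude the rejection probability tends to one. Your version is in fact more careful than the paper's, which asserts the conclusion directly from the two non-overlapping limiting distributions without the explicit joint $\epsilon$--$\delta$ step or a precise reading of ``separable.''
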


\begin{proof}
Let all learning set graphs $G_{t_{Lx}} = F(| E_{t_{Lx}} | \sim M^{n}_E({t_{Lx}}), P_{t_{Lx}} \sim M^{n}_P({t_{Lx}}))$ be drawn from non-anomalous $| E |$ and $P$ distributions and the test instance $G_{t_{test}} = F(| E_{t_{test}} | \sim M^{n}_E({t_{test}}), P_{t_{test}} \sim M^{a_k}_P({t_{test}}))$ be drawn from a non-anomalous $| E |$ distribution but a $P$ distribution that is anomalous on the network property being tested.  Let $\hat{S}_k(G_{t})$ be a consistent estimator of $S_k(P_{t})$.  If $S_k(P_t \sim M^{a_k}_P({t}))$ and $S_k(P_t \sim M^{n}_P({t}))$ are separable, then \[\lim_{| E_{t_{test}} | \rightarrow \infty}  \hat{S}_k(G_{t_{test}})\] and \[\lim_{| E_{t_{Lx}} | \rightarrow \infty}  \hat{S}_k(G_{t_{Lx}})\] converge to two non-overlapping distributions and the probability of rejection approaches 1 for any $\alpha$.
\end{proof}

\vspace{-2mm}

The statistical power of a density consistent statistic depends only on whether the $P$ matrices of normal and anomalous graphs are separable using the true statistic value: as long as $| E_t |$ is sufficiently large the bias is small enough that it is not a factor in the rate of false negatives.

%Theorem \ref{convergetruepos} illustrates how the power of a hypothesis test using $\hat{S}_k(\hat{P}_t)$ as a test statistic converges to the power of a test using $S_k(P_t)$ as the density increases.  Likewise the false positive rate will converge to that of the true statistic value, shown in \ref{convergefalseneg}.  

A special case of density consistency is {\em density consistent and unbiased}, which refers to statistics where in addition to consistency the statistic is also an unbiased statistic of the true $S_k(P_t)$.  

\begin{definition}
A statistic $S$ is \textbf{density unbiased} if $\hat{S}(G_t)$ is a unbiased estimator of $S(P_t)$.  
\end{definition}

%Unbiasedness is a desirable property as a  density consistent statistic may be have error due to bias when the number of observed edges is low, but a density unbiased statistic will only be affected by variance due to the number of edges.  
Unbiasedness is a desirable property because a density consistent statistic without it may produce errors due to bias when the number of observed edges is low.
\subsection{Proposed Density-Consistent Statistics}

%Here we outline our proposed density consistent statistics to use as the foundation for network anomaly detection methods. \jn{Need better connector to improve flow.}

We will now define a set of Density-Consistent statistics designed to measure network properties similar to the previously described dependent statistics, but without the sensitivity to total network edge count.

\noindent \textbf{Probability Mass Shift}: \\
The probability mass shift (MS) is a parallel to GED as a measure of how much change has occurred between the two networks examined.  Mass Shift, however, attempts to measure the change in the underlying $P$ distributions and avoids being directly dependent on the edge counts.  The probability mass shift between time steps $t$ and $t-1$ is 

\vspace{-5mm}
\begin{small}
\begin{align} \label{eq:ms}
MS(P_t) = \sum_{ij} ( {p}_{ij,t} - {p}_{ij,t-1})^2
\end{align}
\end{small}
\vspace{-5mm}

\noindent The MS can be thought of as the total edge probability that changes between the edge distributions in $t$ and $t-1$.  

\noindent \textbf{Probabilistic Degree Shift}: \\
We will now propose a counterpart to the degree distribution which is density consistent.  Define the probabilistic degree of a node to be $PD(v_i) = \sum_{j \in V_{t}, i \ne j} p_{ij,t}$.  Then, let the probabilistic degree shift of a particular node in $G_t$ be defined as the squared difference of the probabilistic degree of that node in times $t$ and $t-1$.  The total probabilistic degree shift (DS) of $G_t$ is then:

\vspace{-5mm}
\begin{small}
\begin{align} \label{eq:ds}
DS(P_t) = \sum_i (\sum_{j \neq i} p_{ij,t} - \sum_{j \neq i} p_{ij,t-1})^2
\end{align}
\end{small}
\vspace{-5mm}

\noindent This is a measure of how much the total probability mass of nodes in the graph change across a single time step.  If the shape of the degree distribution is changing, the probabilistic degree of nodes will be changing as well.

\noindent \textbf{Triangle Probability}: \\
As the name suggests, the triangle probability (TP) statistics is an approach to capturing the transitivity of the network and an alternative to traditional clustering coefficient measures.  Define the triangle probability as: 

\vspace{-5mm}
\begin{small}
\begin{align} \label{eq:tp}
TP(P_t) = \sum_{i,j,k \in V, i \ne j \ne k}  p_{ij,t}  * p_{ik,t} * p_{jk,t}
\end{align}
\end{small}
\vspace{-5mm}

\section{Properties of Network Statistics} %Compare properties of existing statistics with new ones

Now that we have described the different categories of network statistics and their relationship to the network density we will characterize several common network statistics as density dependent or consistent, comparing them to our proposed alternatives. Table~\ref{tab:statistics} summarizes our findings.
%.  For each statistic we will offer an alternative which is Density-Consistent if the original is inconsistent or density independent if the original is consistent.

\subsection{Graph Edit Distance}

\vspace{4mm}
\begin{claim} 
GED is a density dependent statistic.
\end{claim}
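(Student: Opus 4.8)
The plan is to show that the Graph Edit Distance, as defined in Equation~\eqref{eq:ged}, is a monotonic function of the edge counts $|E_t|$ and $|E_{t-1}|$, and thus falls under the density-dependence framework established earlier. Since GED is a delta statistic measured on two consecutive networks, the natural route is to invoke Lemma~\ref{deltalemma}: it suffices to exhibit dependence on either $E_{min}(G_t) = \min(|E_t|,|E_{t-1}|)$ or $E_\Delta(G_t) = |\,|E_t|-|E_{t-1}|\,|$, after which Theorems~\ref{falsepos} and~\ref{falseneg} transfer automatically.

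First I would isolate the edge-dependent portion of the GED formula. The node terms $|V_t|+|V_{t-1}|-2|V_t \cap V_{t-1}|$ are constant under our modeling assumption that $V$ is fixed across time (so this contribution vanishes), leaving the edge contribution $|E_t|+|E_{t-1}|-2|E_t \cap E_{t-1}|$. The key step is to fix the edge \emph{distribution} $P_t$ (i.e., hold the underlying network property constant) and examine how this quantity behaves as we scale up the number of observed edges. I would argue that as $|E_t|$ increases while $P_{t-1}$ and $|E_{t-1}|$ stay fixed, the intersection term $|E_t \cap E_{t-1}|$ saturates — it is bounded above by $|E_{t-1}|$ — while the $|E_t|$ term grows without bound. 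Hence the GED increases monotonically and diverges in $|E_t|$, demonstrating direct dependence on the edge counts rather than on the distributions $P_t$, $P_{t-1}$ alone.

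Having established monotonic divergence in $|E_t|$, I would then note that this is precisely dependence on $E_\Delta$ (and symmetrically on $|E_{t-1}|$), so the hypotheses of Lemma~\ref{deltalemma} are met and the false-positive and false-negative pathologies of Theorems~\ref{falsepos}--\ref{falseneg} apply. This confirms that GED is density dependent in the sense of the earlier definition: two networks with identical $P$ matrices but differing edge volumes will yield different GED values.

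The main obstacle I anticipate is handling the intersection term $|E_t \cap E_{t-1}|$ cleanly in the multigraph setting, where $e_{ij,t}$ counts multiple edges between a node pair. The notion of edge intersection for multigraphs needs a precise convention (e.g.\ $\sum_{ij}\min(e_{ij,t},e_{ij,t-1})$), and I would want to verify that under any reasonable convention the intersection remains bounded relative to the smaller edge count, so that the growth of $|E_t|$ genuinely dominates. Once that boundedness is pinned down, the monotonicity argument is routine and the reduction to the established theorems is immediate.
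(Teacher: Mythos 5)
Your argument is correct, and its core step --- that the intersection term $|E_t \cap E_{t-1}|$ saturates (it is bounded by the smaller edge count) while the larger count grows without bound, so the GED is at least $E_\Delta$ and diverges with it regardless of $P_t$ and $P_{t-1}$ --- is exactly the paper's proof of Theorem~\ref{GEDdelta}. Combined with Lemma~\ref{deltalemma}, this single mode of dependence already establishes the claim, so your proof is sufficient as written. The one substantive difference is coverage: the paper proves a second, complementary result (Theorem~\ref{GEDemin}) showing that even when $E_\Delta$ is held constant, the GED diverges with $E_{min} = \min(|E_t|, |E_{t-1}|)$ whenever $P_t \neq P_{t-1}$, because the per-pair contribution $\mathrm{abs}(e_{ij,t} - e_{ij,t-1})$ concentrates around $E_{min}\cdot \mathrm{abs}(p^a_{ij,t} - p^b_{ij,t-1})$ as more edges are sampled. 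That second mode is what drives the paper's later observation that small non-anomalous differences in the $P$ matrices are amplified on dense graphs (the variance/false-negative pathology of Theorem~\ref{falseneg}), and your argument as written does not capture it. Finally, your concern about defining edge intersection for multigraphs is legitimate but harmless: under the natural convention $|E_t \cap E_{t-1}| = \sum_{ij} \min(e_{ij,t}, e_{ij,t-1})$ the intersection is bounded above by $\min(|E_t|, |E_{t-1}|)$, which is all your monotonicity argument requires.
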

\vspace{-4mm}

When the edge counts of the two time steps are the same, the GED (Eq.~\ref{eq:ged}) can be thought of as the difference in the distribution of edges in the network.  However, the GED is sensitive to $| E_t |$ in two ways: the change in the number of edges from $t\!-\!1$ to $t$: $ E_\Delta = abs( | E_{t} | -  | E_{t-1} |)$, and the minimum number of edges in each time step $E_{min} = min( | E_{t-1} |,  | E_{t} | )$.  In both cases the statistic is density dependent, and in fact it {\em diverges} as the number of edges increases. The first case is discussed in Theorem \ref{GEDdelta}, the second in Theorem \ref{GEDemin}.  

%\vspace{-2mm}
\begin{theorem} 
As $E_\Delta \rightarrow \infty$, $GED(G_t) \rightarrow \infty$, regardless of $P_t$ and $P_{t-1}$.
\label{GEDdelta}
\end{theorem}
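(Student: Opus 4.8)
The plan is to derive a lower bound on $GED(G_t)$ that depends only on the edge counts $|E_t|$ and $|E_{t-1}|$ (and not on the underlying distributions $P_t, P_{t-1}$) and that grows without bound as $E_\Delta \to \infty$. First I would observe that the vertex contribution $|V_t| + |V_{t-1}| - 2|V_t \cap V_{t-1}|$ in Eq.~\ref{eq:ged} is the size of the symmetric difference of the two vertex sets and hence is non-negative, so it may be dropped when seeking a lower bound. This reduces the problem to controlling the edge contribution $|E_t| + |E_{t-1}| - 2|E_t \cap E_{t-1}|$.

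The key step is to bound the multigraph edge intersection. Since $G_t$ and $G_{t-1}$ are multigraphs, the shared-edge count is measured with multiplicity, $|E_t \cap E_{t-1}| = \sum_{ij}\min(e_{ij,t}, e_{ij,t-1})$, and each summand is bounded by the corresponding term of whichever graph has fewer edges; summing gives $|E_t \cap E_{t-1}| \leq \min(|E_t|, |E_{t-1}|) = E_{min}$. Substituting this bound and using the elementary identity $|E_t| + |E_{t-1}| - 2\min(|E_t|, |E_{t-1}|) = \max(|E_t|, |E_{t-1}|) - \min(|E_t|, |E_{t-1}|) = abs(|E_t| - |E_{t-1}|)$, I obtain $GED(G_t) \geq |E_t| + |E_{t-1}| - 2E_{min} = E_\Delta$.

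With the inequality $GED(G_t) \geq E_\Delta$ in hand, the conclusion is immediate: as $E_\Delta \to \infty$ the right-hand side diverges, forcing $GED(G_t) \to \infty$. Crucially, every inequality used depends only on the realized edge counts and never on how the edges are distributed across node pairs, so the bound, and therefore the divergence, holds regardless of $P_t$ and $P_{t-1}$, exactly as claimed.

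I anticipate the only subtle point to be the multigraph interpretation of $E_t \cap E_{t-1}$: the argument needs the intersection to be counted with multiplicity so that the clean bound $|E_t \cap E_{t-1}| \leq E_{min}$ holds and the telescoping identity yields precisely $E_\Delta$. Once that convention is fixed, the remainder is a short chain of elementary inequalities, so I do not expect a genuine obstacle beyond stating the intersection convention precisely.
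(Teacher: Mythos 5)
Your proof is correct and follows essentially the same route as the paper's: both arguments reduce GED to its edge term and observe that even under maximal overlap of the edge multisets the edit distance is bounded below by $E_\Delta$, which diverges. Your version merely makes explicit the multiplicity convention $|E_t \cap E_{t-1}| = \sum_{ij}\min(e_{ij,t},e_{ij,t-1})$ and the resulting inequality $GED(G_t)\geq E_\Delta$, which the paper states informally as the ``best case'' overlap.
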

\vspace{-5mm}

%\subsection*{Proof of Theorem \ref{GEDdelta}}

\begin{proof}
%\begin{align*}
%min( GED(G_t) | E_\Delta) &= min( | E_{t-1} | + E_\Delta - | E_{t-1} |,\\  | E_t | + E_\Delta - | E_t |) \\
%&= E_\Delta \\
%\lim_{E_\Delta \rightarrow \infty} GED(G_t) &= \infty \\
%\end{align*}
%%\end{proof} 
Let $E_\Delta$ be $abs(| E_t | - | E_{t-1} |)$.  Since the GED corresponds to $| E_t | + | E_{t-1} | - 2*| E_t \cap E_{t-1} |$, the minimum edit distance between $G_t$ and $G_{t-1}$ occurs when their edge sets overlap maximally and is equal to $E_\Delta$.
Therefore, as $E_\Delta$ increases even the minimum (i.e., best case) $GED(G_t,G_{t-1})$ also increases.
\end{proof}

\begin{theorem} 
As $E_{min} \!\rightarrow\! \infty$, $GED(G_t) \rightarrow \infty$ if $P_t \!\neq\! P_{t-1}$.
\label{GEDemin}
\end{theorem}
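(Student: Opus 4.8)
The plan is to reduce $GED(G_t)$ to an $L_1$ distance between the two edge-count matrices and then show that a single coordinate of that distance already grows linearly in $E_{min}$. First I would dispose of the vertex terms: since the node set is assumed constant, $V_t=V_{t-1}=V$, so $|V_t|+|V_{t-1}|-2|V_t\cap V_{t-1}|=0$ and only the edge terms of Eq.~\ref{eq:ged} survive. For the multigraph intersection I would use $|E_t\cap E_{t-1}|=\sum_{ij}\min(e_{ij,t},e_{ij,t-1})$, which collapses the edge part to
\[
GED(G_t)=\sum_{ij}\bigl(e_{ij,t}+e_{ij,t-1}-2\min(e_{ij,t},e_{ij,t-1})\bigr)=\sum_{ij}|e_{ij,t}-e_{ij,t-1}|,
\]
the $L_1$ distance between the observed count matrices.

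Next I would pass to the data model. Under $F(|E_t|,P_t)$ the count $e_{ij,t}$ is $\mathrm{Binomial}(|E_t|,p_{ij,t})$ with mean $|E_t|p_{ij,t}$, and by the law of large numbers it concentrates as $e_{ij,t}=|E_t|p_{ij,t}+O_P(\sqrt{|E_t|})$; the same holds at $t-1$. Hence each summand $|e_{ij,t}-e_{ij,t-1}|$ concentrates around $\bigl||E_t|p_{ij,t}-|E_{t-1}|p_{ij,t-1}\bigr|$. The key observation is that because $\sum_{ij}p_{ij,t}=\sum_{ij}p_{ij,t-1}=1$ and $P_t\neq P_{t-1}$, there must be at least one cell whose probability strictly decreases and at least one whose probability strictly increases from $t-1$ to $t$. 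Taking, without loss of generality by the symmetry of GED, $|E_t|\le|E_{t-1}|$ so that $E_{min}=|E_t|$, I would select a cell $(i,j)$ with $p_{ij,t}<p_{ij,t-1}$. For this cell both the density gap and the probability drop push the difference the same way, so there is no cancellation:
\[
\bigl||E_t|p_{ij,t}-|E_{t-1}|p_{ij,t-1}\bigr|=|E_{t-1}|p_{ij,t-1}-|E_t|p_{ij,t}\ge |E_t|(p_{ij,t-1}-p_{ij,t})=E_{min}\,(p_{ij,t-1}-p_{ij,t}).
\]

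Since $p_{ij,t-1}-p_{ij,t}>0$ is a fixed positive constant, this one summand is bounded below by a positive multiple of $E_{min}$, so the whole $L_1$ sum, and therefore $GED(G_t)$, diverges as $E_{min}\to\infty$ irrespective of the other entries of $P_t,P_{t-1}$ and irrespective of $E_\Delta$. The main obstacle I anticipate is precisely guaranteeing non-cancellation: a naive bound such as $\sum_{ij}\bigl||E_t|p_{ij,t}-|E_{t-1}|p_{ij,t-1}\bigr|\ge\bigl|\,|E_t|-|E_{t-1}|\,\bigr|=E_\Delta$ only recovers Theorem~\ref{GEDdelta} and says nothing when $E_\Delta$ stays bounded, so the argument must exploit probability conservation to pick a coordinate where the $E_\Delta$ term cannot offset the linear-in-$E_{min}$ growth. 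A secondary technical point is upgrading the statement from the expected (leading-order) $L_1$ distance to the random $GED$; this is handled by the binomial concentration above, since the chosen summand's mean grows with $E_{min}$ while its standard deviation is of strictly smaller order.
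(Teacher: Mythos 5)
Your proof is correct, and while it shares the paper's basic skeleton -- lower-bound $GED$ by a single cell's contribution $|e_{ij,t}-e_{ij,t-1}|$, let that concentrate around $\bigl|\,|E_t|p_{ij,t}-|E_{t-1}|p_{ij,t-1}\bigr|$, and show this grows linearly in $E_{min}$ -- your key step is genuinely different and strictly stronger. The paper's proof simply picks \emph{any} cell with $p_{ij,t}\neq p_{ij,t-1}$ and then \emph{assumes} $E_\Delta$ stays constant as $E_{min}\to\infty$, so that both edge counts are $\approx E_{min}$ and the per-cell gap is $\approx E_{min}\,|p_{ij,t}-p_{ij,t-1}|$; that extra assumption is doing real work, because for an arbitrary cell the two effects can cancel exactly (e.g.\ $p_{ij,t}=2p_{ij,t-1}$ with $|E_{t-1}|=2|E_t|$ gives $|E_t|p_{ij,t}=|E_{t-1}|p_{ij,t-1}$). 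You close that hole by exploiting probability conservation: since both matrices sum to one and differ, some cell's probability strictly decreases toward the smaller-density time step, and for that cell the density gap and the probability gap push in the same direction, yielding the clean bound $E_{min}(p_{ij,t-1}-p_{ij,t})$ with no assumption on how $E_\Delta$ behaves. You also make explicit the reduction $GED=\sum_{ij}|e_{ij,t}-e_{ij,t-1}|$ via $|E_t\cap E_{t-1}|=\sum_{ij}\min(e_{ij,t},e_{ij,t-1})$, which the paper leaves implicit. The only point worth tightening is the concentration step when $|E_{t-1}|\gg|E_t|$: the fluctuation of $e_{ij,t-1}$ is $O_P(\sqrt{|E_{t-1}|})$, which can exceed $\sqrt{E_{min}}$, but since the mean gap is then at least $E_\Delta p_{ij,t-1}+E_{min}(p_{ij,t-1}-p_{ij,t})$, it still dominates $\sqrt{E_{min}+E_\Delta}$, so your claim that the standard deviation is of strictly smaller order than the chosen summand's mean gap does hold; stating that one-line comparison would make the argument airtight.
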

\vspace{-5mm}

%\subsection*{Proof of Theorem \ref{GEDemin}}

\begin{proof}
%\begin{align*}
%\lim_{E_{min} \rightarrow \infty} GED(G_t) &= \lim_{E_{min} \rightarrow \infty} \sum_{ij} abs( e_{ij,t} - e_{ij,t-1}) \\
%&= \lim_{E_{min} \rightarrow \infty} \sum_{ij} abs( Bin(| E_t |, p_{ij,t}) \\- Bin(| E_{t-1} |&, p_{ij,t-1})) \\
%&= \lim_{E_{min} \rightarrow \infty} \sum_{ij} abs( | E_t | * p_{ij,t} \\- | E_{t-1} |& * p_{ij,t-1}) \\
%&\geq \lim_{E_{min} \rightarrow \infty} \sum_{ij} E_{min} abs( p_{ij,t} - p_{ij,t-1}) \\
%if P_t \neq P_{t-1}&, \exists \: p_{ij,t} \neq p_{ij,t-1},\\ &\lim_{E_{min} \rightarrow \infty} \sum_{ij} E_{min} *abs( p_{ij,t} - p_{ij,t-1})\\ &= \infty \\
%\end{align*}
%%\end{proof} 
Let $G_t = F(| E_{t} |, P^a_{t})$ and $G_{t-1} = F(| E_{t-1} |, P^b_{t-1})$, with $P^a_{t} \!\neq\! P^b_{t-1}$.  Select two nodes $i,j$ such that $p^a_{ij,t} \neq p^b_{ij,t-1}$.  The edit distance contributed by those two nodes is $abs( e_{ij,t} - e_{ij,t-1})$.  Let $E_{min}$ increase but $E_\Delta$ remain constant.  As $E_{min}$ increases the edit distance of the two nodes converges to $abs(E_{min}\!*\!p^a_{ij,t} - E_{min}\!*\!p^b_{ij,t-1}) = E_{min}*abs(p^a_{ij,t} - p^b_{ij,t-1})$.  Since every pair of nodes with differing edge probabilities in the two time steps will have increasing edit distance as $E_{min}$ increases, the global edit distance will also increase.
\end{proof}

\vspace{-3mm}
Since the GED measure is the literal count of edges and nodes that differ in each graph, the statistic is dependent on the difference in size between the two graphs.  Even if the graphs are the same size, comparing two large graphs is likely to produce more differences than two very small graphs due to random chance.  In addition, even when small non-anomalous differences occur between the probability distributions of edges in two time steps, variation in the edge count can result in large differences in GED.

\begin{figure}[t]
\hspace{-2mm}
{\footnotesize
\begin{tabular}{l | c | c | c |}
& Dependent & Consistent & Unbiased  \\
\hline
Graph edit distance & \checkmark && \\
Degree distribution & \checkmark && \\
Barrat clustering &  & \checkmark &  \\
\hline
Mass shift && \checkmark & \\
Degree shift && \checkmark & \\
Triangle probability &&\checkmark & \checkmark \\
%&&&\\
\end{tabular}
}\caption{Statistical properties of previous network statistics and our proposed alternatives.}
\label{tab:statistics}
\end{figure}

%\vspace{-3mm}
\subsection{Degree Distribution Difference}

\vspace{4mm}
\begin{claim} 
DD is a density dependent statistic.
\end{claim}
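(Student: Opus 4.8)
The plan is to establish density dependence of DD exactly as the paper handles GED. Since $DD(G_t)$ in Eq.~\ref{eq:dd} is a delta statistic, by Lemma~\ref{deltalemma} it suffices to exhibit dependence on either $E_\Delta$ or $E_{min}$, after which Theorems~\ref{falsepos}--\ref{falseneg} transfer automatically. Write $n_t(k) = \sum_i \mathbb I(D_{i,t}=k)$ for the number of nodes of degree $k$, so that $DD(G_t) = \sum_k (n_t(k)-n_{t-1}(k))^2$. The organizing observation is that under the data model a node's degree scales linearly with the edge count, $E[D_{i,t}] = |E_t|\sum_{j\neq i}p_{ij,t} = |E_t|\,PD(v_i)$, so the whole degree histogram is a horizontally stretched copy of the probabilistic-degree profile with stretch factor $|E_t|$. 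This already signals that $DD$, which compares the raw histograms bin by bin, must react to changes in $|E_t|$ even when $P_t$ is held fixed.

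First I would prove dependence on $E_\Delta$, mirroring Theorem~\ref{GEDdelta}. Fix a common distribution $P_t = P_{t-1} = P$, so the degree-distribution property being tested is not anomalous. Node $i$ then has expected degree $|E_{t-1}|\,PD(v_i)$ at time $t-1$ and $|E_t|\,PD(v_i)$ at time $t$, values that differ by $E_\Delta\,PD(v_i)$. As $E_\Delta$ grows each node is shifted into a strictly higher bin, so the supports of the two histograms separate; once they are disjoint every bin contributes its full squared count, the cross term $2\sum_k n_t(k)\,n_{t-1}(k)$ vanishes, and $DD$ is forced up to $\sum_k n_t(k)^2 + \sum_k n_{t-1}(k)^2$, a strictly positive quantity bounded below by $2N$. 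Contrasting this with the $E_\Delta = 0$ case, where the histograms coincide in expectation and $DD$ carries only sampling noise, shows that the value of $DD$ is governed by $E_\Delta$, which is the density dependence required.

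To also match Theorem~\ref{GEDemin} I would argue dependence on $E_{min}$ when $P_t \neq P_{t-1}$: with both counts growing together, node-level degree gaps that were buried inside a single bin at low counts, of expected size $E_{min}\,|PD_t(v_i) - PD_{t-1}(v_i)|$, become resolved into distinct bins, which changes the histogram overlap and hence $DD$. The main obstacle is that, unlike GED, $DD$ is \emph{bounded} (by $2N^2$, since each $n_t(k)$ sums to $N$ over $k$ and the cross term is non-negative), so it cannot be shown to diverge; moreover as $E_{min}$ increases the nodes spread over more bins, which shrinks each per-bin count but multiplies the number of occupied bins, two effects that push $DD$ in opposite directions. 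I expect to settle the net behavior by tracking $E[DD]$ through the per-node collision probabilities $\sum_k P(D_{i,t}=k)^2$, which decay as the degree distribution spreads (roughly like $1/\sqrt{E_{min}}$), so that $E[DD]$ moves monotonically with $E_{min}$ toward a saturating value rather than staying constant. Making this monotonicity precise without a full distributional computation is where most of the work will go.
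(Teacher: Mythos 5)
Your proposal follows essentially the same route as the paper: the paper also proves the claim via two theorems showing that expected node degrees scale as $|E_t|\cdot PD(v_i)$, so that growing $E_\Delta$ (with $P_t=P_{t-1}$) or growing $E_{min}$ (with $P_t\neq P_{t-1}$) drives nodes into distinct histogram bins and inflates $DD$, with Lemma~\ref{deltalemma} transferring the false-positive/false-negative consequences. Your explicit cross-term expansion of the histogram and your observation that $DD$ is bounded and therefore saturates rather than diverges are more careful than the paper's own argument (which simply asserts that the collision probability tends to zero so $DD$ increases), but they do not change the underlying approach.
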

\vspace{-4mm}

%Values other than the squared distance can be used but they have the same issues with density inconsistency.  
The degree distribution is naturally very dependent on the total degree of a network: the average degree of nodes is larger in networks with many edges.  The DD measure (Eq.~\ref{eq:dd}) is again sensitive to $| E_t |$ via $ E_\Delta$ and $E_{min}$.  In both cases the statistic is density dependent. The first case, in Theorem \ref{DegreeDist}, shows that as $E_\Delta$ increases, the DD measure will also increase even if the graphs were generated with the same $P$ probabilities. The second case, in Theorem \ref{DegreeDistMin}, shows that as $E_{min}$ increases, small variations in $P$ will increase the DD measure.

%In addition, for $E_\Delta = abs(| E_t | - | E_{t-1} |)$ the degree distribution difference converges to a maximum value $D_{max}(G_t) = \sum_{k=1}^{max_i(D_{i,t}, D_{i,t-1})} (\sum_i \mathbb I(D_{i,t} = k))^2 + (\mathbb I(D_{i,t-1} = k) )^2$ as $E_\Delta$ goes to infinity regardless of the $P$ underlying distribution of edges, as shown in theorem \ref{DegreeDist}.  Essentially as one network increases in total edges, its degree distribution will shift upwards until the two no longer overlap at all.  As the value converges to a constant regardless of $P_t$ or $P_{t-1}$, the statistic is not consistent.  A similar effect occurs when $E_{min}$ goes to infinity: the odds of seeing nodes with similar edge probability totals having the exact number of edges decreases as the total number of edges increases, shown in theorem \ref{DegreeDistMin}.
%  This is a problem because if the degree distribution difference is being used to find anomalies in the shape of the degree distribution, both anomalous and normal degree distribution differences will converge to $Dmax$ and are indistinguishable.

%\vspace{-2mm}
\begin{theorem}
 As $E_\Delta \uparrow$, $DD(G_t) \uparrow$, regardless of $P_t$ and $P_{t-1}$.
 \label{DegreeDist}
 \end{theorem}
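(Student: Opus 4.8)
The plan is to work inside the generative data model: each $G_t = F(|E_t|, P_t)$ is produced by throwing $|E_t|$ independent edges according to $P_t$, so the degree $D_{i,t} = \sum_{j\neq i} e_{ij,t}$ is a count of incident edges with $\mathbb{E}[D_{i,t}] = |E_t|\,c_{i,t}$, where $c_{i,t} = \sum_{j \neq i} p_{ij,t}$ is the marginal edge probability at node $i$; in fact $D_{i,t} \sim \mathrm{Bin}(|E_t|, c_{i,t})$. Writing $n_{k,t} = \sum_i \mathbb{I}(D_{i,t}=k)$ for the degree histogram, the statistic is the squared histogram distance $DD(G_t) = \sum_k (n_{k,t} - n_{k,t-1})^2$. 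I would emphasize the case singled out in the statement, $P_t = P_{t-1} = P$ (equal distributions, so $c_{i,t}=c_{i,t-1}=c_i$), since this is where the effect is purely attributable to the edge counts; unequal $P$ only adds separation.

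First I would show the expected degree of every participating node scales linearly in $|E|$: for each $i$ with $c_i > 0$, $\mathbb{E}[D_{i,t}] - \mathbb{E}[D_{i,t-1}] = c_i(|E_t| - |E_{t-1}|)$, whose magnitude is $c_i E_\Delta$ and diverges as $E_\Delta \to \infty$. Next I would invoke a Chernoff/Hoeffding bound for the binomial counts plus a union bound: simultaneously for all $i$, $D_{i,t}$ lies within $O(\sqrt{|E_t|\log N})$ of its mean with high probability. Since the per-node shift between the two times, $c_i E_\Delta$, grows linearly in $E_\Delta$ while the fluctuation grows only like $\sqrt{|E|}$, the degree of each node occupies a well-separated location at $t$ versus $t-1$. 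The final step turns this separation of degrees into a lower bound on the histogram distance: once the two histograms occupy (almost) disjoint sets of degree values, the cross terms vanish and $DD(G_t) \to \sum_k n_{k,t}^2 + \sum_k n_{k,t-1}^2$, which is strictly larger than its value when the supports coincide (the $E_\Delta = 0$ baseline, where $n_t \approx n_{t-1}$). Monotonicity of the overlap in $E_\Delta$ then yields that $DD$ increases with $E_\Delta$.

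I expect the main obstacle to be that the statement is genuinely false as a deterministic claim about arbitrary degree sequences: one can hold $DD$ fixed at a tiny constant while sending $E_\Delta \to \infty$ by moving a single node to an arbitrarily high degree, so a purely combinatorial bound such as Cauchy--Schwarz applied to $\sum_k k\,d_k = 2E_\Delta$ (with $d_k = n_{k,t}-n_{k,t-1}$) is vacuous because the degree range is unbounded. The proof must therefore lean on the generative model and concentration to rule out such adversarial spreading, which is the real content. A secondary point worth stating explicitly is that, unlike GED, the histogram distance is bounded ($DD \le 4N^2$, and more sharply by $2\sum_k (n_{k,t}^2 + n_{k,t-1}^2)$), so the correct conclusion is that $DD$ increases toward a $P$-dependent saturation value rather than diverging --- which is precisely why the statement is phrased with ``$\uparrow$'' rather than ``$\to\infty$'' as in Theorem~\ref{GEDdelta}.
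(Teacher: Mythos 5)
Your proposal is correct and follows essentially the same route as the paper's proof, which also argues that increasing $E_\Delta$ drives the expected degrees in the two time steps apart so that the probability of any two nodes sharing a degree bin goes to zero, forcing the histogram difference up. Your version is considerably more careful than the paper's two-sentence argument --- the explicit binomial model, the Chernoff-plus-union-bound concentration step, and the observation that $DD$ saturates at $\sum_k n_{k,t}^2 + \sum_k n_{k,t-1}^2$ rather than diverging are all gaps the paper leaves implicit --- but the underlying idea is the same.
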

\vspace{-5mm}
%\subsection*{Proof of Theorem \ref{DegreeDist}}

\begin{proof}
%\begin{align*}
%%lim_{E_\Delta \rightarrow \infty} \sum_{k=1}^{max_i(D_{i,t}, D_{i,t-1})} (\sum_i \mathbb I(D_{i,t} = k) - \mathbb I(D_{i,t-1} = k) )^2 \\
%Let \: | E_t | > | E_{t-1} |&, E_\Delta = | E_t | - | E_{t-1} |, | E_{t-1} | < \infty \\
%\lim_{E_\Delta \rightarrow \infty} D_{i,t} &= \lim_{E_\Delta \rightarrow \infty} \sum_{j \ne i} e_{ij,t} \\
%&= \lim_{E_\Delta \rightarrow \infty} \sum_{j \ne i} Bin(| E_t |, p_{ij,t}) \\
%&= \lim_{E_\Delta \rightarrow \infty} \sum_{j \ne i} | E_t |*p_{ij,t} \\ 
%&= \lim_{E_\Delta \rightarrow \infty} \sum_{j \ne i} (E_{t-1} + E_\Delta )*p_{ij,t} = \infty \\ 
%\lim_{E_\Delta \rightarrow \infty} DD_{i,t-1} &= DD_{i,t-1} = \sum_{j \ne i} | E_{t-1} |*p_{ij,t} < \infty \\
%&\lim_{E_\Delta \rightarrow \infty} P(\exists \: i,j \: s.t. \: D_{i,t} = D_{j,t-1}) = 0\\
%%\lim_{E_\Delta \rightarrow \infty} DD(G_t) &= \\  \sum_{k=1}^{max_i(D_{i,t}, D_{i,t-1})} (& \sum_i \mathbb I(D_{i,t} = k))^2 + (\mathbb I(D_{i,t-1} = k) )^2 \\
%\end{align*}
%%\end{proof} 
Pick any node $i$ in $G_t$ and $j$ in $G_{t-1}$.  If $| E_t |$ increases, and $| E_{t-1} |$ stays the same, the expected degree of $i$ increases, while $j$ stays the same; likewise the inverse is true if $| E_{t-1} |$ increases and $| E_t |$ stays the same.  Thus, as $E_\Delta$ increases the probability of any two nodes having the same degree approaches zero, so the degree distribution difference of the two networks increases with greater $E_\Delta$.
\end{proof}

%A similar effect occurs when the edge count in both networks increases.  

%\vspace{-2mm}
\begin{theorem}
 As $E_{Min} \uparrow$, $DD(G_t) \uparrow$ \: if \: $P_t \neq P_{t-1}$.
 %if \: $\exists i,j $ s.t. \\$\sum_{k \ne i} p_{ik,t} \neq \sum_{k \ne j} p_{jk,t-1}$.
 \label{DegreeDistMin}
 \end{theorem}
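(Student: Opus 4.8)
The plan is to mirror the structure of the proof of Theorem~\ref{GEDemin}: hold $E_\Delta = abs(|E_t| - |E_{t-1}|)$ fixed and let $E_{min} \to \infty$, so that both $|E_t|$ and $|E_{t-1}|$ grow linearly in $E_{min}$. Because $DD$ depends on $P_t$ only through the probabilistic degrees $PD_{i,t} = \sum_{j \ne i} p_{ij,t}$, I would first reduce the hypothesis $P_t \ne P_{t-1}$ to the substantive case in which some node $i^*$ has $PD_{i^*,t} \ne PD_{i^*,t-1}$. I would flag explicitly that this reduction is where $P_t \ne P_{t-1}$ is genuinely used: if every row sum coincided, the two degree histograms would agree in expectation, and the argument would have to lean entirely on the non-alignment of the two independent samples instead.

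Next I would establish concentration. Under the sampling model $G_t = F(|E_t|, P_t)$, the degree $D_{i,t}$ is a sum of multinomial cell counts with mean $\mu_{i,t} \approx |E_t|\,PD_{i,t}$ and variance of order $E_{min}$, so Chebyshev gives $abs(D_{i,t} - \mu_{i,t}) = O(\sqrt{E_{min}})$ with high probability. For the distinguished node, $abs(\mu_{i^*,t} - \mu_{i^*,t-1}) \ge c\,E_{min} - O(E_\Delta)$ with $c = abs(PD_{i^*,t} - PD_{i^*,t-1}) > 0$, which dominates the $O(\sqrt{E_{min}})$ fluctuations. Hence, with probability tending to $1$, $i^*$ lands in a degree bin at time $t$ disjoint from its bin at time $t-1$, and more generally any two nodes with distinct probabilistic degrees eventually occupy distinct bins.

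The final step is to turn bin separation into a statement about $DD = \sum_k (n_{k,t} - n_{k,t-1})^2$, where $n_{k,t} = \sum_i \mathbb{I}(D_{i,t}=k)$. Writing $DD = \sum_k n_{k,t}^2 + \sum_k n_{k,t-1}^2 - 2\sum_k n_{k,t}n_{k,t-1}$, the entire question reduces to controlling the overlap $\sum_k n_{k,t}n_{k,t-1}$, i.e. the number of bins occupied by nodes at both times with a coinciding degree. As $E_{min}$ grows the degrees spread out and accidental coincidences between the two independent samples become rare, so this overlap decreases and $DD$ increases.

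The main obstacle is precisely this overlap term, and it exposes a genuine subtlety separating $DD$ from $GED$. In Theorem~\ref{GEDemin} the edit distance sums \emph{coordinate-aligned} differences $abs(e_{ij,t}-e_{ij,t-1})$, each of which scales like $E_{min}$, so $GED$ grows without bound. By contrast $DD$ is an $L_2$ distance between \emph{histograms}: a node whose degree drifts far away merely occupies one additional distinct bin and contributes $1$, not an $E_{min}$-sized term. Consequently $DD$ is bounded above by $2N^2$ and in fact approaches $\approx 2N$ as coincidences vanish, so the correct target is the \emph{monotone increase toward this saturation value} asserted by the ``$\uparrow$'', not divergence. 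The delicate parts of a rigorous argument are therefore (i) a second-moment estimate showing the expected overlap $\sum_k n_{k,t}n_{k,t-1}$ is decreasing in $E_{min}$, i.e. bounding accidental degree coincidences between the two samples, and (ii) checking that the small-$E_{min}$ value of $DD$ — driven through the low-degree bins by the fixed $E_\Delta$ — lies below the saturation level, so that the net trend over the relevant range is indeed an increase. I would emphasize that this holds in the intended regime of ``small variations in $P$'' (small, fixed $E_\Delta$); if one instead wanted unbounded growth one would have to let $N$ scale with $E_{min}$ or compare $DD$ against the $\sqrt{E_{min}}$ sampling-noise scale rather than use its raw value.
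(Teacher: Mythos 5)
Your core argument is the same as the paper's: hold $E_\Delta$ fixed, note that expected degrees scale as $E_{min}\cdot PD_{i,t}$, and conclude that nodes whose probabilistic degrees differ across the two time steps eventually land in distinct bins of the degree histogram, driving $DD$ up. The paper's proof is exactly this, compressed into three sentences and omitting the concentration step you supply via Chebyshev. Where you genuinely go beyond the paper is on two points it elides. First, the reduction of $P_t \neq P_{t-1}$ to the existence of nodes with differing probabilistic degrees: the paper simply writes ``pick any node $i$ in $G_t$ and $j$ in $G_{t-1}$ such that $PD(i,G_t)\neq PD(j,G_{t-1})$'' without observing that such a pair need not exist --- two distinct $P$ matrices can have identical row sums (two different ``regular'' structures, say), in which case the two degree histograms agree in expectation and the claimed increase can fail; you flag this correctly as the place the hypothesis is actually consumed. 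Second, your observation that $DD$ is an $L_2$ distance between histograms, hence bounded by $2N^2$ and saturating near $2N$ once all degrees are distinct, is a real structural difference from Theorem~\ref{GEDemin} (where each coordinate-aligned difference itself scales like $E_{min}$); the ``$\uparrow$'' in the statement can only mean increase toward saturation, not divergence, and your decomposition $DD=\sum_k n_{k,t}^2+\sum_k n_{k,t-1}^2-2\sum_k n_{k,t}n_{k,t-1}$, reducing the claim to a decreasing overlap term, is the right way to make that precise. Both obstacles you name are genuine and neither is addressed in the paper's proof, so your version is the more honest account of what the theorem can actually assert.
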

 \vspace{-5mm}

\begin{proof}
%\begin{align*}
%\lim_{E_{Min} \rightarrow \infty}& D_{i,t} = \lim_{E_\Delta \rightarrow \infty} \sum_{j \ne i} ( E_{Min} + \mathcal I(| E_t | > | E_{t-1} |)* E_\Delta) *p_{ij,t} \\
%\lim_{E_{Min} \rightarrow \infty}& P[\exists \: i,k \: s.t. \: D(i,t) = D(k,t-1)] \\= \lim_{E_{Min} \rightarrow \infty}& P[\exists \: i,k \: s.t. \: \sum_{j \ne i} e_{ij,t} = \sum_{l \ne k} e_{kl,t-1}] \\
%= \lim_{E_{Min} \rightarrow \infty}& P[\exists \: i,k \: s.t. \: \sum_{j \ne i} Bin(E_{Min} + \mathcal I(| E_t | > | E_{t-1} |)*E_\Delta, p_{ij,t})\\ &= \sum_{l \ne k} Bin(E_{Min} + \mathcal I(| E_t | < | E_{t-1} |)*E_\Delta, p_{kl,t-1})]
%\\ = \lim_{E_{Min} \rightarrow \infty}&  P[\exists \: i,k \: s.t. \: \sum_{j \ne i} Bin(E_{Min}, p_{ij,t}) \\& = \sum_{l \ne k} Bin(E_{Min}, p_{kl,t-1})] \rightarrow 0 \\
%\end{align*}
%%\end{proof} 
Let $PD(i,G_t) = \sum_{k \neq i} p_{ik,t}$ be the probabilistic degree of node $i$ for $G_t = F(| E_t |, P_t)$.  Pick any node $i$ in $G_t$ and $j$ in $G_{t-1}$ such that $PD(i,G_t) \neq PD(j,G_{t-1})$.  For a constant $E_\Delta$, as $E_{min}$ increases, the expected degrees of $i$ and $j$ converge to $D_{i,t}=E_{min}*\sum_{k \neq i} p_{ik,t}$ and $D_{j,t-1}=E_{min}*\sum_{k \neq j} p_{jk,t-1}$ respectively. This means that the probability of the two nodes having the same degree approaches zero.  Since every pair of nodes with differing edge probabilities in the two time steps will have unique degrees, the degree distribution difference will also increase.
\end{proof} 
 
\vspace{-3mm}
If a node has very similar edge probabilities in matrices $P_t$ and $P_{t-1}$, when few edges are sampled it is likely to have the same degree in both time steps, and thus the impact on the degree distribution difference will be low.  However, as the number of edges increases, even small non-anomalous difference in the $P$ matrices will become more apparent (i.e., the node is likely to be placed in different bins in the degree distribution difference calculation), and the impact on the measure will be larger.

\subsection{Weighted Clustering Coefficient}

\vspace{4mm}
\begin{claim} 
CB is a density consistent statistic.
\end{claim}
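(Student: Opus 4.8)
The plan is to show that the empirical Barrat coefficient, computed from the observed edge counts $e_{i,j}$ via Eq.~\ref{eq:wcc}, converges as $| E_t | \to \infty$ to the analogous quantity $CB(P_t)$ obtained by substituting the edge probabilities $p_{ij,t}$ for the counts (so that $s_i$ becomes $\sum_j p_{ij,t}$ and $k_i$ becomes $\sum_j I[p_{ij,t} > 0]$); by the definition of density consistency this establishes the claim. The engine of the argument is the multinomial sampling model $G_t = F(| E_t |, P_t)$: the normalized counts $\hat{p}_{ij,t} = e_{i,j}/| E_t |$ are the empirical frequencies of a multinomial sample, so by the law of large numbers $\hat{p}_{ij,t} \to p_{ij,t}$ in probability as $| E_t | \to \infty$.

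First I would rewrite each ingredient of Eq.~\ref{eq:wcc} to expose its dependence on $| E_t |$. The weighted terms scale linearly with sample size: writing $e_{i,j} = | E_t |\,\hat{p}_{ij,t}$ gives $\frac{e_{i,j}+e_{i,k}}{2} = | E_t | \cdot \frac{\hat{p}_{ij,t}+\hat{p}_{ik,t}}{2}$ and $s_i = \sum_j e_{i,j} = | E_t | \sum_j \hat{p}_{ij,t}$. The crucial observation is that these two factors of $| E_t |$ sit on opposite sides of the fraction and cancel exactly, so the sampled edge volume drops out of the weighted portion of the statistic, leaving each node's summand expressed purely in terms of the empirical frequencies $\hat{p}_{ij,t}$, the indicators $a_{i,j}$, and the count $k_i$.

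The remaining ingredients, the adjacency indicators and the degree count, behave differently: they \emph{saturate} rather than scale. I would show that for every pair with $p_{ij,t} > 0$ the probability of observing no edge, $(1-p_{ij,t})^{| E_t |}$, tends to zero, so $a_{i,j} = I[e_{i,j} > 0] \to I[p_{ij,t} > 0]$ in probability; consequently $k_i = \sum_j a_{i,j}$ converges to the number of positive-probability neighbors $\sum_j I[p_{ij,t} > 0]$, a quantity fixed by $P_t$. Combining the cancellation of the scaling factors with the convergence of the saturating factors, the continuous mapping theorem yields that each node's summand, and hence the full sum, converges to $CB(P_t)$. Because this limit is a function of $P_t$ alone, $\hat{CB}(G_t)$ is a consistent estimator of $CB(P_t)$ and $CB$ is density consistent.

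I expect the main obstacle to be the asymmetric treatment of the two kinds of terms: the weight terms must be kept in their scaled form so that the $| E_t |$ factors cancel, while the indicator and degree terms must be sent to their saturated limits, and reconciling these within a single application of the continuous mapping theorem requires care to justify that the joint convergence holds. A secondary technical point is handling degenerate nodes, namely those with $s_i = 0$ or fewer than two positive-probability neighbors (where $k_i - 1 \le 0$ makes the prefactor ill-defined); such nodes participate in no triangles and contribute nothing to either $\hat{CB}(G_t)$ or $CB(P_t)$, so I would exclude them from the outer sum and argue convergence over the remaining well-defined terms.
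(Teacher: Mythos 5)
Your proposal is correct and follows essentially the same route as the paper's own proof of Theorem~5.3: the multinomial sampling model gives $e_{ij,t} \to |E_t|\,p_{ij,t}$, the $|E_t|$ factors cancel between the weight terms and $s_i$, and the indicators and $k_i$ saturate to $\mathcal I[p_{ij,t}>0]$ and $k_{i,lim}=\sum_{j\neq i}\mathcal I[p_{ij,t}>0]$, leaving a limit expressible in $P_t$ alone. Your added care about joint convergence and degenerate nodes is a refinement the paper elides, not a different argument.
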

\vspace{-4mm}

As shown in Theorem \ref{Barrat} below, the weighted Barrat clustering coefficient (CB, Eq.~\ref{eq:wcc}) is in fact density consistent.  However, we will show later that the triangle probability statistic is also density unbiased, which gives more robust results, even on very sparse networks. 

%\vspace{-2mm}
\begin{theorem}
$CB(G_t)$ is a consistent estimator of $CB(P_t)$, with a bias that converges to 0.
%CB (Eq.~\ref{eq:wcc}) is an estimate of \[ \frac{\sum_{j,k} p_{ij,t} + p_{ik,t}}{2*k_{i,lim} * \sum_{j \neq i} p_{ij,t}} * \mathcal I[p_{ij,t} > 0] * \mathcal I[p_{ik,t} > 0] * \mathcal I[p_{jk,t} > 0] \] with a bias that converges to 0, where $k_{i,lim} = \sum_{j \neq i} \mathcal I[p_{ij,t}>0]$.
\label{Barrat}
\end{theorem}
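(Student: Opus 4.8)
The plan is to show that the explicit dependence of $CB(G_t)$ on the edge volume $|E_t|$ cancels exactly, after which every empirical quantity converges to its $P_t$-counterpart by the law of large numbers. First I would rewrite the edge counts in terms of the empirical probabilities via $e_{ij}=|E_t|\,\hat{p}_{ij}$ and $s_i=\sum_j e_{ij}=|E_t|\,\hat{s}_i$ with $\hat{s}_i=\sum_j \hat{p}_{ij}$. The numerator of each summand carries the factor $\tfrac12(e_{ij}+e_{ik})=\tfrac{|E_t|}{2}(\hat{p}_{ij}+\hat{p}_{ik})$, while the denominator carries $s_i=|E_t|\hat{s}_i$; the two powers of $|E_t|$ cancel, leaving
\[
CB(G_t)=\sum_i \frac{1}{N(k_i-1)\hat{s}_i}\sum_{j,k}\frac{\hat{p}_{ij}+\hat{p}_{ik}}{2}\,a_{ij}a_{ik}a_{jk},
\]
which is a function of the empirical probabilities $\hat{p}_{ij}$ and the indicators $a_{ij}=I[e_{ij}>0]$ alone, with no residual leading-order dependence on the edge count. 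This cancellation is the conceptual heart of the argument and explains why Barrat clustering is consistent while the unnormalized counterparts (GED, DD) diverge.

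Next I would invoke consistency of the multinomial sampling model $F(|E_t|,P_t)$: each $\hat{p}_{ij}=e_{ij}/|E_t|$ is a sample proportion, so by the strong law of large numbers $\hat{p}_{ij}\xrightarrow{a.s.}p_{ij}$ and hence $\hat{s}_i\xrightarrow{a.s.}s_i^P:=\sum_j p_{ij}$ as $|E_t|\to\infty$. The degree terms $k_i$ need separate handling, since $a_{ij}$ is a discontinuous function of $\hat{p}_{ij}$ at zero. I would split into two cases: if $p_{ij}=0$ then $e_{ij}=0$ for every draw, so $a_{ij}=0=I[p_{ij}>0]$ deterministically; if $p_{ij}>0$ then $P(e_{ij}=0)=(1-p_{ij})^{|E_t|}$, whose values are summable in $|E_t|$, so by Borel--Cantelli $a_{ij}\xrightarrow{a.s.}1=I[p_{ij}>0]$. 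Thus $a_{ij}\xrightarrow{a.s.}a_{ij}^P:=I[p_{ij}>0]$ and $k_i\xrightarrow{a.s.}k_i^P:=\sum_j a_{ij}^P$.

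Finally I would assemble these limits with the continuous mapping theorem (Slutsky's theorem serving equally well), applied to the finite sum over the fixed node set $V$. Since every factor converges almost surely and the combining map is continuous wherever $\hat{s}_i>0$ and $k_i\ge 2$ (the regime in which the summand is defined), the whole expression converges almost surely to $CB(P_t)=\sum_i \frac{1}{N(k_i^P-1)s_i^P}\sum_{j,k}\frac{p_{ij}+p_{ik}}{2}\,a_{ij}^P a_{ik}^P a_{jk}^P$, establishing consistency. For the bias claim, I would note that the Barrat normalization keeps $CB(G_t)$ in a bounded range; almost-sure convergence together with the bounded convergence theorem then gives $E[CB(G_t)]\to CB(P_t)$, i.e. the bias $E[CB(G_t)]-CB(P_t)\to 0$.

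The main obstacle I anticipate is the discontinuity of the indicators $a_{ij}$, which blocks a one-line appeal to the continuous mapping theorem and forces the case split above. A secondary subtlety is ensuring the normalizing denominators $N(k_i-1)\hat{s}_i$ stay bounded away from zero in the limit, which I handle by restricting attention to nodes with $k_i^P\ge 2$ and $s_i^P>0$ (nodes that are asymptotically isolated or pendant contribute no defined triangle term), and confirming the uniform boundedness needed to upgrade the almost-sure limit to convergence of the expectation.
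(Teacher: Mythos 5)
Your proposal is correct and follows essentially the same route as the paper's proof: cancel the $|E_t|$ factors between the numerator $\tfrac12(e_{ij}+e_{ik})$ and the strength $s_i$, let the empirical frequencies converge to $p_{ij,t}$ and the indicators $a_{ij}$ converge to $I[p_{ij,t}>0]$, and conclude the limit is a function of $P_t$ alone. You add rigor the paper omits (Borel--Cantelli for the indicators, bounded convergence for the bias claim), but the underlying argument is the same.
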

\vspace{-5mm}

%\subsection*{Proof of Theorem \ref{Barrat}}

\begin{proof}
%\begin{align*}
%\lim_{| E_t | \rightarrow \infty} CB_{i,t} = \frac{\sum_{j,k} | E_t | p_{ij,t} + | E_t | p_{ik,t}}{2*k_{i,lim} * \sum_{j \neq i} | E_t | p_{ij,t}} \\ *(\lim_{| E_t | \rightarrow \infty} \mathcal I[e_{ij,t} > 0] * \mathcal I[e_{ik,t} > 0] * \mathcal I[e_{jk,t} > 0]) \\=  \frac{\sum_{j,k} p_{ij,t} + p_{ik,t}}{2*k_{i,lim} * \sum_{j \neq i} p_{ij,t}} * \mathcal I[p_{ij,t} > 0] * \mathcal I[p_{ik,t} > 0] * \mathcal I[p_{jk,t} > 0] \\
%E[ CB_{i,t} ] = E[ \frac{1}{(k_i-1)*s_i} \sum_{j,k} \frac{e_{i,j}+e_{i,k}}{2} a_{i,j}*a_{i,k}*a_{j,k} ] \\
%= \frac{1}{(E[ k_i]-1)*E[s_i]} \sum_{j,k} \frac{E[e_{i,j}]+E[e_{i,k}]}{2} E[a_{i,j}]*E[a_{i,k}]*E[a_{j,k}]\\
%= \frac{1}{(E[ k_i]-1)*\sum_{j \neq i} p_{ij,t}} \sum_{j,k} \frac{p_{ij,t}+p_{ik,t}}{2} E[a_{i,j}]*E[a_{i,k}]*E[a_{j,k}]\\
%= \frac{1}{(E[ k_i]-1)*\sum_{j \neq i} p_{ij,t}} \sum_{j,k} \frac{p_{ij,t}+p_{ik,t}}{2}\\* (1-p_{ij,t}^{| E_t |})*(1-p_{ik,t}^{| E_t |})*(1-p_{jk,t}^{| E_t |})\\
%\end{align*} 
%\end{proof} 
For $G_t = F(| E_t |, P_t)$ the number of edges observed any pair of nodes can be represented using a multinomial distribution $\frac{| E_t |!}{e_{ij,t}! ... e_{yz,t}! } p_{ij,t}^{e_{ij,t}} ... p_{yz,t}^{e_{yz,t}}$.  As $| E_t | \rightarrow \infty$, the rate of sampling a particular node pair $i,j$ is $p_{ij,t}$, so: \[\lim_{| E_t | \rightarrow \infty} e_{ij,t} = | E_t | * p_{ij,t}\]  
Let $CB(G_t, i)$ refer to the clustering coefficent of node $i$. Then, in the limit of $| E_t |$ it converges to: 

\vspace{-4mm}
\begin{small}
\begin{align*}
\lim_{| E_t | \rightarrow \infty} \!\!CB(G_t,i) &= \frac{\sum_{j,k} | E_t | p_{ij,t} + | E_t | p_{ik,t}}{2*k_{i,lim} * \sum_{j \neq i} | E_t | p_{ij,t}} \\ 
&*(\lim_{| E_t | \rightarrow \infty} \!\!\mathcal I[e_{ij,t} \!>\! 0] * \mathcal I[e_{ik,t} \!>\! 0] * \mathcal I[e_{jk,t} \!>\! 0]) \\
&=  \frac{\sum_{j,k} p_{ij,t} + p_{ik,t}}{2*k_{i,lim} * \sum_{j \neq i} p_{ij,t}} \\  
&* \mathcal I[p_{ij,t} \!>\! 0]  * \mathcal I[p_{ik,t} \!>\! 0] * \mathcal I[p_{jk,t} \!>\! 0] \\
\end{align*}
\end{small}
\vspace{-4mm}

\vspace{-4mm}
\noindent where $k_{i,lim} = \sum_{j \neq i} \mathcal I[p_{ij,t} > 0]$.  Since this limit can be expressed in terms of $P_t$ alone, and CB is a sum of the clustering over all nodes, the Barrat clustering coefficient is a density consistent statistic.  
\end{proof}

If we calculate the expectation of the Barrat clustering, we obtain: 
%\vspace{-4mm}
\allowdisplaybreaks{
\begin{small}
\begin{align*}
E[ CB(G_t,i) ] &= E[ \frac{1}{(k_i-1)*s_i} \sum_{j,k} \frac{e_{i,j}+e_{i,k}}{2} a_{i,j}*a_{i,k}*a_{j,k} ] \\
&= E[ \frac{1}{(k_i-1)*s_i} ] \sum_{j,k} \frac{E[e_{i,j}]+E[e_{i,k}]}{2} \\& *E[a_{i,j}]*E[a_{i,k}]*E[a_{j,k}]\\
&= E[ \frac{1}{(k_i-1)*s_i} ] \sum_{j,k} \frac{p_{ij,t}+p_{ik,t}}{2} \\& *E[a_{i,j}]*E[a_{i,k}]*E[a_{j,k}]\\
&= E[ \frac{1}{(k_i-1)*s_i} ] \sum_{j,k} \frac{p_{ij,t}+p_{ik,t}}{2}\\ &*(1-p_{ij,t}^{| E_t |})*(1-p_{ik,t}^{| E_t |})*(1-p_{jk,t}^{| E_t |})
\end{align*}
\end{small}
}
\vspace{-4mm}

As this does not simplify to the limit of CB, it is not an unbiased estimator, and is thus density consistent but not unbiased. 
Other weighted clustering coefficients are also available but they have the same properties as the Barrat statistic.

%\begin{figure*}[t!]
%\subfigure[]{\includegraphics[width=.66\columnwidth]{BaratFewEdges3D.pdf}}
%\subfigure[]{\includegraphics[width=.66\columnwidth]{BaratModerateEdges3D.pdf}}
%\subfigure[]{\includegraphics[width=.66\columnwidth]{BaratManyEdges3D.pdf}}
%\caption{Ability of Barrat Clustering to detect differences in triangle closure probability between learning and test sets.  X-axis is the parameter of the learning set while Y-axis is the test set parameter.  Darker coloration indicates more example points flagged. (a) 1k-2k edges, (b) 3k-5k edges, (c) 7k-10k edges.}
%\label{fig:barrat3d}
%\end{figure*}
%
%\begin{figure*}[t!]
%\subfigure[]{\includegraphics[width=.66\columnwidth]{TriangleProbabilityFewEdges3D.pdf}}
%\subfigure[]{\includegraphics[width=.66\columnwidth]{TriangleProbModerateEdges3D.pdf}}
%\subfigure[]{\includegraphics[width=.66\columnwidth]{TriangleProbManyEdges3D.pdf}}
%\caption{Ability of Triangle Probability to detect differences in triangle closure probability between learning and test sets.  X-axis is the parameter of the learning set while Y-axis is the test set parameter.  Darker coloration indicates more example points flagged.   (a) 1k-2k edges, (b) 3k-5k edges, (c) 7k-10k edges.}
%\label{fig:triangle3d}
%\end{figure*}

\subsection{Probability Mass Shift}

\vspace{4mm}
\begin{claim} 
MS is a density consistent statistic.
\end{claim}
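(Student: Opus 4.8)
The plan is to show that the empirical version of the statistic, $\hat{MS}(G_t) = \sum_{ij} (\hat{p}_{ij,t} - \hat{p}_{ij,t-1})^2$ with $\hat{p}_{ij,t} = e_{ij,t}/| E_t |$, converges to $MS(P_t) = \sum_{ij}(p_{ij,t} - p_{ij,t-1})^2$ as the edge counts grow, since that is precisely the definition of density consistency. The whole argument rests on a single building block already established in the proof of Theorem \ref{Barrat}: under the multinomial sampling model $F(| E_t |, P_t)$, the marginal count $e_{ij,t}$ is binomial with mean $| E_t | p_{ij,t}$, so by the law of large numbers the empirical frequency $\hat{p}_{ij,t}$ is a consistent estimator of $p_{ij,t}$, i.e. $\hat{p}_{ij,t} \to p_{ij,t}$ as $| E_t | \to \infty$ (and likewise $\hat{p}_{ij,t-1} \to p_{ij,t-1}$ as $| E_{t-1} | \to \infty$).

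Given that, the remaining steps are routine. First I would fix a single node pair $ij$ and observe that the map $(x,y) \mapsto (x-y)^2$ is continuous, so by the continuous mapping theorem $(\hat{p}_{ij,t} - \hat{p}_{ij,t-1})^2 \to (p_{ij,t} - p_{ij,t-1})^2$. Second, because the node set $V$ is fixed and finite, $\hat{MS}$ is a finite sum of $N^2$ such terms, and a finite sum of convergent sequences converges to the sum of the limits; hence $\hat{MS}(G_t) \to \sum_{ij}(p_{ij,t}-p_{ij,t-1})^2 = MS(P_t)$. This establishes that $\hat{MS}(G_t)$ is a consistent estimator of $MS(P_t)$, so MS is density consistent. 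Note that, unlike the GED analysis, there is no normalization by an edge-count-dependent quantity, which is exactly why the statistic converges cleanly to a function of $P_t$ and $P_{t-1}$ alone.

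The one genuinely delicate point is that consistency here does not upgrade to unbiasedness, which I would address explicitly to justify the blank entry for MS in Table \ref{tab:statistics}. Taking the expectation of a single squared term and using $E[\hat{p}_{ij,t}] = p_{ij,t}$ together with the binomial variance $\mathrm{Var}(\hat{p}_{ij,t}) = p_{ij,t}(1-p_{ij,t})/| E_t |$ yields $E[(\hat{p}_{ij,t}-\hat{p}_{ij,t-1})^2] = (p_{ij,t}-p_{ij,t-1})^2 + \mathrm{Var}(\hat{p}_{ij,t}) + \mathrm{Var}(\hat{p}_{ij,t-1})$, where the covariance cross term drops out because the two time steps are sampled independently. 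The two variance terms form a strictly positive bias that scales like $1/| E_t |$ and $1/| E_{t-1} |$, so it vanishes in the limit (confirming consistency) but is nonzero at finite edge counts (so MS is consistent yet not unbiased). I do not anticipate a serious obstacle; the only care required is to match the paper's informal handling of the limit while making clear that $\hat{p}_{ij,t} \to p_{ij,t}$ is the law of large numbers and that passing from pairwise convergence to convergence of the sum is legitimate precisely because $V$ is finite.
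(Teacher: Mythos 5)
Your proposal is correct, and its computational core is the same as the paper's proof of Theorem~\ref{MassShift}: both reduce to the multinomial/binomial moments $E[\hat{p}_{ij,t}] = p_{ij,t}$ and $\mathrm{Var}(\hat{p}_{ij,t}) = p_{ij,t}(1-p_{ij,t})/| E_t |$, with the cross term factoring because the two time steps are sampled independently, to obtain $E[\widehat{MS}(G_t)] = MS(P_t) + \sum_{ij} p_{ij,t}(1-p_{ij,t})/| E_t | + \sum_{ij} p_{ij,t-1}(1-p_{ij,t-1})/| E_{t-1} |$, and both conclude by letting the excess terms vanish. Where you genuinely differ is in how consistency itself is certified: the paper argues only that this bias converges to zero, which is the operational reading of density consistency it adopts, whereas you first prove convergence of $\widehat{MS}(G_t)$ to $MS(P_t)$ directly via the law of large numbers for $\hat{p}_{ij,t}$, the continuous mapping theorem applied to $(x,y) \mapsto (x-y)^2$, and the finiteness of $V$, and only then use the expectation calculation to exhibit the strictly positive finite-sample bias. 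Your route is the more careful one --- asymptotic unbiasedness does not by itself imply consistency in the convergence-in-probability sense --- and it additionally makes explicit why the ``Unbiased'' entry for MS is left blank in Table~\ref{tab:statistics}. The paper's route, for its part, yields the exact form of the bias, which it then exploits to define the bias-corrected statistic $\widehat{MS}^*$ in Eq.~\ref{eq:ms-c}; your version delivers that same payoff since you compute the identical two variance terms.
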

\vspace{-4mm}

As the true $P$ is unobserved, we cannot calculate the Mass Shift (MS, Eq.~\ref{eq:ms}) statistic exactly and must use the empirical Probability Mass Shift: $\widehat{MS}(G_t) = \sum_{ij} ( \hat{p}_{ij,t} - \hat{p}_{ij,t-1})^2$.  As shown in Theorem \ref{MassShift} below, the bias of this estimator %given by $\sum_{ij} \frac{{p}_{ij,t}(1-{p}_{ij,t})}{| E_t |} + \frac{{p}_{ij,t-1}(1-{p}_{ij,t-1})}{| E_{t-1} |}$, which 
approaches 0 as $| E_t |$ and $| E_{t-1} |$ increase, making this a density consistent statistic.  %While the first case might be due to an anomaly in the edge count parameter, the latter can occur without any form of anomaly but instead from a gradual growth of the network over time.  As both of these values increase the GED will diverge, so unless the edge counts are strictly bounded GED is a poor choice for a test statistic.

%\vspace{-2mm}
\begin{theorem}
$\widehat{MS}(G_t)$ is a consistent estimator of $MS(P_t)$, with a bias that converges to 0.
%The empirical probability mass shift is an estimator of the true probability mass shift (Eq.~\ref{eq:ms}) with a bias that converges to 0:
\label{MassShift}
%\vspace{-5mm}
%\begin{small}
%\begin{align*}
%&E[ \sum_{ij} ( \hat{p}_{ij,t} - \hat{p}_{ij,t-1})^2 ] \\ &= \sum_{ij} ( {p}_{ij,t} - {p}_{ij,t-1})^2 + \frac{{p}_{ij,t}(1-{p}_{ij,t})}{| E_t |} + \frac{{p}_{ij,t-1}(1-{p}_{ij,t-1})}{| E_{t-1} |} 
%\end{align*}
%\end{small}
%\vspace{-5mm}
\end{theorem}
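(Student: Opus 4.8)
The plan is to establish the two claims separately: first consistency, leveraging the multinomial data model together with the continuous mapping theorem, and then the explicit expectation of the estimator to identify the bias term and show that it vanishes.

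First I would recall that under the data model $G_t = F(| E_t |, P_t)$ the cell counts $e_{ij,t}$ are multinomially distributed, so each empirical frequency $\hat{p}_{ij,t} = e_{ij,t}/| E_t |$ is the sample mean of $| E_t |$ i.i.d. indicators of landing in cell $(i,j)$. Hence $E[\hat{p}_{ij,t}] = p_{ij,t}$ and $\mathrm{Var}(\hat{p}_{ij,t}) = p_{ij,t}(1-p_{ij,t})/| E_t |$, and likewise for time $t-1$. By the weak law of large numbers each $\hat{p}_{ij,t} \to p_{ij,t}$ in probability as $| E_t | \to \infty$, so the full frequency matrices converge entrywise. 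Since $\widehat{MS}$ is a fixed finite sum of squared differences of these entries, hence a continuous function of the two frequency matrices, the continuous mapping theorem yields $\widehat{MS}(G_t) \to MS(P_t)$ in probability, which is exactly density consistency.

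For the bias I would compute $E[\widehat{MS}(G_t)]$ term by term. Each summand depends only on cell $(i,j)$ in the two slices, and because those slices are drawn independently in the data model, I can expand $E[(\hat{p}_{ij,t} - \hat{p}_{ij,t-1})^2] = E[\hat{p}_{ij,t}^2] - 2 p_{ij,t} p_{ij,t-1} + E[\hat{p}_{ij,t-1}^2]$. Rewriting each second moment as variance plus squared mean collapses this to $(p_{ij,t} - p_{ij,t-1})^2 + \mathrm{Var}(\hat{p}_{ij,t}) + \mathrm{Var}(\hat{p}_{ij,t-1})$. Summing over all $ij$ gives $E[\widehat{MS}(G_t)] = MS(P_t) + \sum_{ij} p_{ij,t}(1-p_{ij,t})/| E_t | + \sum_{ij} p_{ij,t-1}(1-p_{ij,t-1})/| E_{t-1} |$, so the bias is precisely the two variance sums. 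Using $\sum_{ij} p_{ij,t}(1-p_{ij,t}) = 1 - \sum_{ij} p_{ij,t}^2 \le 1$ (valid since $P_t$ is a probability distribution), the bias is bounded by $1/| E_t | + 1/| E_{t-1} |$, which tends to $0$ as both edge counts grow, confirming that the bias converges to zero while remaining strictly positive (so $\widehat{MS}$ is consistent but not unbiased).

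The main subtlety to handle carefully is the cross term $E[\hat{p}_{ij,t}\hat{p}_{ij,t-1}]$: it factors as $p_{ij,t}p_{ij,t-1}$ only because the two time slices are sampled independently, so I would make that independence explicit. A secondary point worth flagging is that the negative correlations between distinct cells of a single multinomial never enter the computation, since each summand of $\widehat{MS}$ couples only matching cells across time and expectation is linear; thus only the marginal variance of each cell matters, which is why the clean bound goes through.
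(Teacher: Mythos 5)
Your proposal is correct, and its central computation is the same as the paper's: both expand $E[(\hat{p}_{ij,t}-\hat{p}_{ij,t-1})^2]$ term by term, use the factorization of the cross term via independence of the two slices, rewrite each second moment as variance plus squared mean, and arrive at the identical bias expression $\sum_{ij} p_{ij,t}(1-p_{ij,t})/|E_t| + p_{ij,t-1}(1-p_{ij,t-1})/|E_{t-1}|$ (the paper works with the raw counts $e_{ij,t}$ and a binomial marginal rather than with $\hat{p}_{ij,t}$ directly, but the algebra is the same). Where you genuinely depart from the paper is in treating consistency as a separate claim: you invoke the weak law of large numbers for each $\hat{p}_{ij,t}$ and the continuous mapping theorem to get $\widehat{MS}(G_t) \to MS(P_t)$ in probability, whereas the paper proves only that the bias vanishes and then asserts consistency as a consequence. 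Strictly speaking, vanishing bias alone does not imply consistency (one also needs the variance of the estimator to vanish, or a direct convergence-in-probability argument), so your extra step closes a small logical gap in the paper's argument rather than merely duplicating it. Your explicit bound on the bias by $1/|E_t| + 1/|E_{t-1}|$, using $\sum_{ij} p_{ij,t}(1-p_{ij,t}) \le 1$, is also a clean addition the paper does not state, and it makes the uniformity of the convergence over all $P_t$ apparent.
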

\vspace{-5mm}

%\subsection*{Proof of Theorem \ref{MassShift}}

\begin{proof}
The expectation of the empirical Mass Shift can be calculated with

\vspace{-5mm}
\begin{small}
\begin{align*}
E[ \sum_{ij} (& \hat{p}_{ij,t} - \hat{p}_{ij,t-1})^2 ] \\ &= E[ \sum_{ij} ( \frac{e_{ij,t}}{| E_t |} - \frac{e_{ij,t-1}}{| E_{t-1} |})^2 ] \\
%&= E[ \sum_{ij} ( \frac{Bin(| E_t |, p_{ij,t}}{| E_t |} - \frac{Bin(| E_{t-1} |, p_{ij,t-1}}{| E_{t-1} |}) ] \\
%& Let f(x) = P(\frac{Bin(| E_t |, p_{ij,t}}{| E_t |} - \frac{Bin(| E_{t-1} |, p_{ij,t-1}}{| E_{t-1} |}  = x) \\
%&
&= \sum_{ij} E[ \frac{e^2_{ij,t}}{| E_t |^2} + \frac{e^2_{ij,t-1}}{| E_{t-1} |^2} - 2*\frac{e_{ij,t}}{| E_t |}\frac{e_{ij,t-1}}{| E_{t-1} |} ] \\
\end{align*}
\end{small}
\vspace{-7mm}

As the expectation of $e_{ij,t}^2$ for any node pair $i,j$ can be written as:

\vspace{-7mm}
\begin{small}
\begin{align*}
E[ e^2_{ij,t} ] &= Var(e_{ij,t}) + E[ e_{ij,t} ]^2 \\&= Var( Bin(| E_t |, {p}_{ij,t}) ) + E[ Bin(| E_t |, {p}_{ij,t}) ]^2  \\
&= | E_t |*{p}_{ij,t}*(1-{p}_{ij,t}) + | E_t |^2*{p}^2_{ij,t} \\
\end{align*}
\end{small}
\vspace{-7mm}

The expected empirical mass shift can then be written as:

\vspace{-7mm}
\allowdisplaybreaks{
\begin{small}
\begin{align*}
\sum_{ij} E[ \frac{e^2_{ij,t}}{| E_t |^2} &+ \frac{e^2_{ij,t-1}}{| E_{t-1} |^2} - 2*\frac{e_{ij,t}}{| E_t |}\frac{e_{ij,t-1}}{| E_{t-1} |} ] \\
&= \sum_{ij} \frac{| E_t |*{p}_{ij,t}*(1-{p}_{ij,t})}{| E_t |^2} + \frac{| E_t |^2*{p}^2_{ij,t}}{| E_t |^2}  \\&+ \frac{| E_{t-1} |*{p}_{ij,t-1}*(1-{p}_{ij,t-1})}{| E_{t-1} |^2} \\& + \frac{| E_{t-1} |^2*{p}^2_{ij,t-1}}{| E_{t-1} |^2} - 2*\frac{| E_t |*{p}_{ij,t}}{| E_t |}\frac{| E_{t-1} |*{p}_{ij,t-1}}{| E_{t-1} |} \\
&=  \sum_{ij} {p}^2_{ij,t} - 2*{p}_{ij,t}*{p}_{ij,t-1} + {p}^2_{ij,t-1} \\& + \frac{{p}_{ij,t}(1-{p}_{ij,t})}{| E_t |} + \frac{{p}_{ij,t-1}(1-{p}_{ij,t-1})}{| E_{t-1} |} \\
&=  \sum_{ij} ( {p}_{ij,t} - {p}_{ij,t-1})^2 + \frac{{p}_{ij,t}(1-{p}_{ij,t})}{| E_t |} \\& + \frac{{p}_{ij,t-1}(1-{p}_{ij,t-1})}{| E_{t-1} |} \\
\end{align*}
\end{small}
}
\vspace{-8mm}
%\end{proof} 

As the two additional bias terms converge to 0 as $| E_t |$ and $| E_{t-1} |$ increase, the empirical mass shift is a consistent estimator of the true mass shift, and is density consistent.
\end{proof}

We can improve the rate of convergence as well by using our empirical estimates of the probabilities to subtract an estimate of the bias from the statistic. We use the following {\em bias-corrected} version of the empirical statistic in all experiments:

\vspace{-5mm}
\begin{small}
\begin{align} \label{eq:ms-c}
\widehat{MS}^*(G_t) &= \sum_{ij} ( \hat{p}_{ij,t} - \hat{p}_{ij,t-1})^2 \nonumber \\ &- \frac{\hat{p}_{ij,t}(1-\hat{p}_{ij,t})}{| E_t |} - \frac{\hat{p}_{ij,t-1}(1-\hat{p}_{ij,t-1})}{| E_{t-1} |}
\end{align}
\end{small}
%\vspace{-5mm}

\subsection{Probabilistic Degree Shift}

\vspace{4mm}
\begin{claim} 
DS is a density consistent statistic.
\end{claim}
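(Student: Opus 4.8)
The plan is to mirror the structure of the Mass Shift proof (Theorem~\ref{MassShift}), since the Degree Shift has the same sum-of-squared-differences form but aggregates edge counts at the node level rather than at the node-pair level. First I would write down the empirical statistic $\widehat{DS}(G_t) = \sum_i (\widehat{PD}_{i,t} - \widehat{PD}_{i,t-1})^2$, where the empirical probabilistic degree is $\widehat{PD}_{i,t} = \sum_{j \neq i} \hat{p}_{ij,t} = D_{i,t}/| E_t |$ and $D_{i,t} = \sum_{j \neq i} e_{ij,t}$ is the observed degree of node $i$. The goal is to show that $E[\widehat{DS}(G_t)]$ equals the true $DS(P_t)$ plus a bias that vanishes as $| E_t |, | E_{t-1} | \to \infty$.

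The key observation, and the step I expect to carry the real content, is that although $D_{i,t}$ aggregates many multinomial cells, it is itself distributed as $\mathrm{Bin}(| E_t |, PD_{i,t})$ with $PD_{i,t} = \sum_{j \neq i} p_{ij,t}$. This follows from the standard fact that grouping the outcomes of a multinomial yields a binomial with the summed cell probability: every sampled edge is either incident to $i$ (total probability $PD_{i,t}$) or not. Once this reduction is in place, I can reuse the moment computation of the Mass Shift proof verbatim at the node level, namely $E[D_{i,t}] = | E_t | PD_{i,t}$ and $\mathrm{Var}(D_{i,t}) = | E_t | PD_{i,t}(1 - PD_{i,t})$, so that $E[\widehat{PD}_{i,t}] = PD_{i,t}$ and $E[\widehat{PD}_{i,t}^2] = PD_{i,t}^2 + PD_{i,t}(1 - PD_{i,t})/| E_t |$.

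I would then expand the square inside the expectation as $E[(\widehat{PD}_{i,t} - \widehat{PD}_{i,t-1})^2] = E[\widehat{PD}_{i,t}^2] - 2 E[\widehat{PD}_{i,t}] E[\widehat{PD}_{i,t-1}] + E[\widehat{PD}_{i,t-1}^2]$, using independence of the samples at $t$ and $t-1$ to factor the cross term. Collecting terms gives
\[
E[\widehat{DS}(G_t)] = DS(P_t) + \sum_i \frac{PD_{i,t}(1 - PD_{i,t})}{| E_t |} + \frac{PD_{i,t-1}(1 - PD_{i,t-1})}{| E_{t-1} |}.
\]
The two residual sums constitute the bias. Since each summand $PD_{i,t}(1 - PD_{i,t})$ lies in $[0, 1/4]$ (because $PD_{i,t} \in [0,1]$) and there are only $N$ nodes, the total bias is $O(N/| E_t |) + O(N/| E_{t-1} |)$, which converges to $0$ as the edge counts grow. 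Hence $\widehat{DS}$ is a consistent estimator of $DS(P_t)$, and the statistic is density consistent.

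The main obstacle is really just the multinomial-grouping argument that identifies the node degree as binomial with parameter $PD_{i,t}$; everything downstream is a mechanical repeat of the Mass Shift computation. Two points I would state explicitly are the independence of $G_t$ and $G_{t-1}$ that licenses factoring the cross term, and the bound $PD_{i,t} \le 1$ that keeps each bias summand controlled so that the finite node set $N$ leaves the total bias vanishing.
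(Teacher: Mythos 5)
Your proof is correct, and it takes a cleaner route than the paper's. The paper expands the squared degree difference all the way down to the node-pair level, writing out the terms $e_{ij,t}^2/|E_t|^2$ and the cross terms $e_{ij,t}e_{ik,t}/|E_t|^2$ separately, and then applies binomial moments to each cell $e_{ij,t}$ individually; in doing so it treats $E[e_{ij,t}e_{ik,t}]$ as the product of expectations and arrives at the bias $\sum_i \sum_{j\neq i} p_{ij,t}(1-p_{ij,t})/|E_t| + \sum_{j \neq i} p_{ij,t-1}(1-p_{ij,t-1})/|E_{t-1}|$. Your multinomial-grouping observation --- that $D_{i,t} = \sum_{j\neq i} e_{ij,t}$ is exactly $\mathrm{Bin}(|E_t|, PD_{i,t})$ --- collapses all of that into a single binomial variance per node and yields the bias $\sum_i PD_{i,t}(1-PD_{i,t})/|E_t| + PD_{i,t-1}(1-PD_{i,t-1})/|E_{t-1}|$. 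This is actually the \emph{exact} expectation, since the binomial variance of the grouped count automatically absorbs the negative covariances $\mathrm{Cov}(e_{ij,t},e_{ik,t}) = -|E_t|p_{ij,t}p_{ik,t}$ that the paper's pairwise expansion silently drops; the discrepancy between the two bias expressions is itself $O(1/|E_t|)$, so both arguments reach the same conclusion, but yours is tighter and avoids the bookkeeping over cross terms entirely. Your explicit appeal to independence of the $t$ and $t-1$ samples for the cross term, and the bound $PD_{i,t}(1-PD_{i,t}) \le 1/4$ over a fixed node set $N$, close the argument; one could even note $\sum_i PD_{i,t} \le 2$ to get a bias of $O(1/|E_t|)$ independent of $N$, though that is not needed here.
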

\vspace{-4mm}

Again, since the true $P$ is unobserved, we cannot calculate the Degree Shift (DS, Eq.~\ref{eq:ds}) statistic exactly and must use the empirical Probability Degree Shift: \\
$\widehat{DS}(\hat{G}_t) = \sum_i (\sum_{j \neq i} \hat p_{ij,t} - \sum_{j \neq i} \hat p_{ij,t-1})^2 $.
As shown in Theorem \ref{DegreeShift} below, the bias of this estimator 
approaches 0, making this a density consistent statistic.  % as $E_{Min} \rightarrow \infty$

\begin{theorem}
$\widehat{DS}(G_t)$ is a consistent estimator of $DS(P_t)$, with a bias that converges to 0.
%The empirical probabilistic degree shift is an estimator of the true degree shift (Eq.~\ref{eq:ds}) with a bias that converges to 0.
%\[ E_{Min} \rightarrow \infty, \widehat{DS}(G_t) \rightarrow DS(G_t) \]
\label{DegreeShift}
\end{theorem}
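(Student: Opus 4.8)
The plan is to follow the same template as the proof of Theorem~\ref{MassShift}, since $\widehat{DS}$ is structurally a mass shift computed on aggregated per-node probabilities rather than on individual cell probabilities. First I would rewrite the statistic in terms of the empirical probabilistic degree $\widehat{PD}_{i,t} = \sum_{j\neq i}\hat p_{ij,t} = D_{i,t}/| E_t |$, where $D_{i,t} = \sum_{j\neq i} e_{ij,t}$, so that $\widehat{DS}(G_t) = \sum_i (\widehat{PD}_{i,t} - \widehat{PD}_{i,t-1})^2$ and $DS(P_t) = \sum_i (PD_{i,t} - PD_{i,t-1})^2$ with $PD_{i,t} = \sum_{j\neq i} p_{ij,t}$. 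The single observation that makes the rest routine is that $D_{i,t}$ is a \emph{marginal} of the multinomial sample $F(| E_t |, P_t)$: collapsing all node-pair categories incident to $i$ yields a binomial count $D_{i,t} \sim \mathrm{Bin}(| E_t |, PD_{i,t})$. This reduces each per-node term to exactly the single-cell form already handled in Theorem~\ref{MassShift}, with $p_{ij,t}$ replaced by the aggregate $PD_{i,t}$.

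Next I would compute $E[\widehat{DS}(G_t)]$ term by term. Expanding the square gives $E[\widehat{PD}_{i,t}^2] - 2\,E[\widehat{PD}_{i,t}]\,E[\widehat{PD}_{i,t-1}] + E[\widehat{PD}_{i,t-1}^2]$, where the cross term factors because the samples at $t$ and $t-1$ are drawn independently. Using the binomial second moment $E[D_{i,t}^2] = \mathrm{Var}(D_{i,t}) + E[D_{i,t}]^2 = | E_t |\,PD_{i,t}(1-PD_{i,t}) + | E_t |^2 PD_{i,t}^2$, I get $E[\widehat{PD}_{i,t}^2] = PD_{i,t}^2 + PD_{i,t}(1-PD_{i,t})/| E_t |$, and symmetrically for $t-1$. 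Combining, each node contributes $(PD_{i,t} - PD_{i,t-1})^2 + PD_{i,t}(1-PD_{i,t})/| E_t | + PD_{i,t-1}(1-PD_{i,t-1})/| E_{t-1} |$.

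Summing over $i$ then yields
\begin{small}
\begin{equation*}
E[\widehat{DS}(G_t)] = DS(P_t) + \sum_i \frac{PD_{i,t}(1-PD_{i,t})}{| E_t |} + \sum_i \frac{PD_{i,t-1}(1-PD_{i,t-1})}{| E_{t-1} |},
\end{equation*}
\end{small}
so the bias is the pair of $O(1/| E_t |)$ and $O(1/| E_{t-1} |)$ terms, each a finite sum of bounded quantities, which vanishes as $| E_t |,| E_{t-1} | \rightarrow \infty$. This establishes density consistency and, as in the mass-shift case, shows the estimator remains biased at finite edge counts, matching the table entry that DS is consistent but not unbiased; a bias-corrected form analogous to Eq.~\ref{eq:ms-c} could be obtained by subtracting the plug-in estimates built from $\widehat{PD}_{i,t}$.

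The only real subtlety, and the step I would be most careful about, is the binomial-marginal claim for $D_{i,t}$. It relies on the multinomial model $F(| E_t |, P_t)$ and on the fact that the events ``an edge lands on a pair incident to $i$'' partition cleanly, so that $\mathrm{Var}(D_{i,t}) = | E_t |\,PD_{i,t}(1-PD_{i,t})$ holds despite the negative correlations among the individual components $e_{ij,t}$. Once that aggregation is justified, no cross-covariance terms between distinct cells survive inside a single node's degree, and the computation is identical in form to Theorem~\ref{MassShift}.
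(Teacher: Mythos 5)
Your proof is correct, and it takes a cleaner route than the paper's. The paper expands the square $\bigl(\sum_{j\neq i}\hat p_{ij,t}-\sum_{j\neq i}\hat p_{ij,t-1}\bigr)^2$ into individual cell terms $e_{ij,t}^2/|E_t|^2$ plus cross terms $e_{ij,t}e_{ik,t}/|E_t|^2$, evaluates the squared terms exactly as in Theorem~\ref{MassShift}, and treats the within-time cross terms as if $E[\hat p_{ij,t}\hat p_{ik,t}]=p_{ij,t}p_{ik,t}$ --- i.e., it silently drops the multinomial covariance $\mathrm{Cov}(e_{ij,t},e_{ik,t})=-|E_t|\,p_{ij,t}p_{ik,t}$, so its stated bias is $\sum_i\sum_{j\neq i}p_{ij,t}(1-p_{ij,t})/|E_t|$ plus the analogous $t-1$ term. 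You instead aggregate first, observing that $D_{i,t}$ is a binomial marginal of the multinomial sample with parameter $PD_{i,t}$, which reduces each node's contribution to exactly the single-cell calculation of Theorem~\ref{MassShift} and yields the bias $\sum_i PD_{i,t}(1-PD_{i,t})/|E_t|+\sum_i PD_{i,t-1}(1-PD_{i,t-1})/|E_{t-1}|$. Your version is the exactly correct expectation: the binomial-marginal variance automatically absorbs the negative within-star covariances that the paper's cell-by-cell expansion mishandles, and the two expressions differ by precisely those covariance terms. Both versions vanish as $|E_t|,|E_{t-1}|\rightarrow\infty$, so the consistency conclusion is unaffected, but your aggregation argument is shorter, avoids the cross-covariance bookkeeping entirely, and gives the right constant for a bias-corrected estimator analogous to Eq.~\ref{eq:ms-c}. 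One caveat that applies equally to both arguments: showing the bias converges to $0$ establishes asymptotic unbiasedness; full consistency as claimed also needs the variance of $\widehat{DS}(G_t)$ to vanish, which neither you nor the paper addresses, though it follows by a routine concentration argument for multinomial proportions.
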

\vspace{-5mm}

%\subsection*{Proof of Theorem \ref{DegreeShift}}

\begin{proof}
The expectation of the empirical degree shift can be calculated with
%\begin{align*}
%E[ \widehat{DS}(G_t) ] &= E[ \sum_i (\sum_{j \neq i} \hat p_{ij,t} - \sum_{j \neq i} \hat p_{ij,t-1})^2 ] \\
%&= E[ \sum_i (\sum_{j \neq i} (\hat p_{ij,t})^2 + \sum_{j \neq i} (\hat p_{ij,t-1})^2 \\& + 2*\sum_{j,k \neq i} \hat p_{ij,t} * \hat p_{ik,t} + 2*\sum_{j,k \neq i} \hat p_{ij,t-1} * \hat p_{ik,t-1} \\& - 2*\sum_{j,k \neq i} \hat p_{ij,t} \hat p_{ik,t-1} ] \\
%& E[(\hat p_{ij,t})^2] = E[ \hat p_{ij,t} ]^2 + Var( \hat p_{ij,t}) \\& = E[ \frac{ e_{ij,t} }{| E_t |} ]^2 + Var( \frac{ e_{ij,t} }{| E_t |}) = p_{ij,t}^2 + \frac{p_{ij,t}(1-p_{ij,t})}{| E_t |} \\
%E[ \widehat{DS}(G_t) ] &= \sum_i (\sum_{j \neq i} p_{ij,t}^2 + \sum_{j \neq i} \frac{p_{ij,t}(1-p_{ij,t})}{| E_t |}  + \sum_{j \neq i} p_{ij,t-1}^2 \\& + \sum_{j \neq i} \frac{p_{ij,t-1}(1-p_{ij,t-1})}{| E_t |} + 2*\sum_{j,k \neq i} p_{ij,t} * \hat p_{ik,t} \\& + 2*\sum_{j,k \neq i} p_{ij,t-1} * p_{ik,t-1} - 2*\sum_{j,k \neq i} p_{ij,t} p_{ik,t-1} ) \\
%&= \sum_i (\sum_{j \neq i} \hat p_{ij,t} - \sum_{j \neq i} \hat p_{ij,t-1})^2 + \sum_{j \neq i} \frac{p_{ij,t}(1-p_{ij,t})}{| E_t |} \\& + \sum_{j \neq i} \frac{p_{ij,t-1}(1-p_{ij,t-1})}{| E_t |} \\
%\end{align*}
%%\end{proof} 

\vspace{-5mm}
\begin{small}
\begin{align*}
E[ \widehat{DS}(G_t) ] &= E \Big[ \sum_i (\sum_{j \neq i} \hat p_{ij,t} - \sum_{j \neq i} \hat p_{ij,t-1})^2 \Big] \\
&= E \Bigg[ \sum_i (\sum_{j \neq i} \frac{e^2_{ij,t}}{| E_t |^2} + \sum_{j \neq i} \frac{e^2_{ij,t-1}}{| E_{t-1} |^2} \\& + 2*\sum_{j,k \neq i} \frac{e_{ij,t}}{| E_t |} * \frac{e_{ik,t}}{| E_t |} + 2*\sum_{j,k \neq i} \frac{e_{ij,t-1}}{| E_{t-1} |} * \frac{e_{ik,t-1}}{| E_{t-1} |} \\& - 2*\sum_{j,k \neq i} \hat p_{ij,t} \hat p_{ik,t-1} \Bigg] \\
%E[ \widehat{DS}(G_t) ] &= E[ \sum_i (\sum_{j \neq i} \hat p_{ij,t} - \sum_{j \neq i} \hat p_{ij,t-1})^2 ] \\
%&= E[ \sum_i (\sum_{j \neq i} (\hat p_{ij,t})^2 + \sum_{j \neq i} (\hat p_{ij,t-1})^2 \\& + 2*\sum_{j,k \neq i} \hat p_{ij,t} * \hat p_{ik,t} + 2*\sum_{j,k \neq i} \hat p_{ij,t-1} * \hat p_{ik,t-1} \\& - 2*\sum_{j,k \neq i} \hat p_{ij,t} \hat p_{ik,t-1} ] \\
&= \sum_i \Bigg(\sum_{j \neq i} p_{ij,t}^2 + \sum_{j \neq i} \frac{p_{ij,t}(1-p_{ij,t})}{| E_t |}  + \sum_{j \neq i} p_{ij,t-1}^2 \\& + \sum_{j \neq i} \frac{p_{ij,t-1}(1-p_{ij,t-1})}{| E_t |} + 2*\sum_{j,k \neq i} p_{ij,t} * \hat p_{ik,t} \\& + 2*\sum_{j,k \neq i} p_{ij,t-1} * p_{ik,t-1} - 2*\sum_{j,k \neq i} p_{ij,t} p_{ik,t-1} \Bigg) \\
&= \sum_i \Big(\sum_{j \neq i} \hat p_{ij,t} - \sum_{j \neq i} \hat p_{ij,t-1}\Big)^2 + \sum_{j \neq i} \frac{p_{ij,t}(1-p_{ij,t})}{| E_t |} \\& + \sum_{j \neq i} \frac{p_{ij,t-1}(1-p_{ij,t-1})}{| E_t |} \\
\end{align*}
\end{small}
\vspace{-10mm}
%& E[(\hat p_{ij,t})^2] = E[ \hat p_{ij,t} ]^2 + Var( \hat p_{ij,t}) \\& = E[ \frac{ e_{ij,t} }{| E_t |} ]^2 + Var( \frac{ e_{ij,t} }{| E_t |}) = p_{ij,t}^2 + \frac{p_{ij,t}(1-p_{ij,t})}{| E_t |} \\
%\end{proof} 

Which is density-consistent because the two additional bias terms converge to 0 as $| E_t |$ increases. \end{proof}

Since the bias converges to 0, the statistic is density consistent.  By subtracting out the empirical estimate of this bias term we can hasten the convergence. We use the following {\em bias-corrected} empirical degree shift in our experiments:

\vspace{-5mm}
\begin{small}
\allowdisplaybreaks{
\begin{align} \label{eq:ds-c}
 \widehat{DS}^*(\hat{G}_t) &= \sum_i \left( (\sum_{j \neq i} \hat{p}_{ij,t} - \sum_{j \neq i} \hat{p}_{ij,t-1})^2 \nonumber \right. \\& \left. - \sum_{j \neq i} \frac{\hat{p}_{ij,t}(1-\hat{p}_{ij,t})}{| E_t |} + \sum_{j \neq i} \frac{\hat{p}_{ij,t-1}(1-\hat{p}_{ij,t-1})}{| E_t |} \right)
\end{align}
}
\end{small}
%\vspace{-5mm}

\subsection{Triangle Probability}

\vspace{4mm}
\begin{claim} 
TP is a density consistent and density unbiased statistic.
\end{claim}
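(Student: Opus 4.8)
The plan is to form the empirical estimator $\widehat{TP}(G_t) = \sum_{i \ne j \ne k} \hat{p}_{ij,t}\hat{p}_{ik,t}\hat{p}_{jk,t}$ with $\hat{p}_{ij,t} = e_{ij,t}/|E_t|$, and to compute its expectation under the multinomial sampling model $F(|E_t|, P_t)$. The key structural observation, and the reason TP behaves better than MS and DS, is that in each summand the three factors correspond to three \emph{distinct} node pairs $(i,j)$, $(i,k)$, $(j,k)$, since $i,j,k$ are required to be distinct. Consequently the product $\hat{p}_{ij,t}\hat{p}_{ik,t}\hat{p}_{jk,t}$ never contains a squared count $e_{ij,t}^2$; every such square was precisely the source of the residual $\frac{p(1-p)}{|E_t|}$ bias terms in the MS and DS proofs (Thms~\ref{MassShift}--\ref{DegreeShift}). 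With no squares, only a cross-moment of distinct cells appears.

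First I would compute that cross-moment. Under the multinomial model the counts in distinct cells are negatively correlated, so one cannot simply factor $E[e_{ij,t}e_{ik,t}e_{jk,t}]$ into $E[e_{ij,t}]E[e_{ik,t}]E[e_{jk,t}]$. Instead, writing each count as a sum of per-draw indicators and noting that a single draw falls into at most one cell, only ordered triples of \emph{distinct} draws contribute, giving the falling-factorial moment
\begin{small}
\begin{equation*}
E[e_{ij,t}e_{ik,t}e_{jk,t}] = |E_t|(|E_t|-1)(|E_t|-2)\, p_{ij,t} p_{ik,t} p_{jk,t}.
\end{equation*}
\end{small}
Dividing by $|E_t|^3$ and summing over triples then yields $E[\widehat{TP}(G_t)] = \frac{(|E_t|-1)(|E_t|-2)}{|E_t|^2}\,TP(P_t)$. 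Since the prefactor tends to $1$ as $|E_t|\to\infty$, the bias vanishes and $\widehat{TP}$ is density consistent.

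To upgrade this to density unbiasedness, I would simply rescale, defining the corrected estimator
\begin{small}
\begin{equation*}
\widehat{TP}^*(G_t) = \frac{1}{|E_t|(|E_t|-1)(|E_t|-2)} \sum_{i \ne j \ne k} e_{ij,t}e_{ik,t}e_{jk,t},
\end{equation*}
\end{small}
whose expectation equals $TP(P_t)$ exactly by the moment above. Because the correction is a single deterministic multiplicative constant, rather than the additive, probability-dependent corrections needed for MS and DS, no residual bias remains for any finite $|E_t| \ge 3$.

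The main obstacle I anticipate is the cross-moment computation: it is essential to use the exact multinomial falling-factorial identity rather than the independence approximation $E[e_{ij,t}e_{ik,t}e_{jk,t}] \approx |E_t|^3 p_{ij,t}p_{ik,t}p_{jk,t}$. The approximation suffices for consistency but would wrongly suggest the naively normalized estimator is already unbiased; the $(|E_t|-1)(|E_t|-2)$ factor is exactly what the correction in $\widehat{TP}^*$ must absorb. I would also note that, unlike the Barrat coefficient (Thm~\ref{Barrat}), TP contains no $\mathcal{I}[e>0]$ indicators, so it is a genuine polynomial in the empirical probabilities and its expectation is available in closed form, which is what makes an exactly unbiased estimator attainable here.
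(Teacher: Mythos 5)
Your proposal is correct, and it is in fact more careful than the paper's own argument; the two diverge at exactly the point you flag as the ``main obstacle.'' The paper's proof substitutes $E[e_{ij,t}] = |E_t|\,p_{ij,t}$ separately into each factor of $\hat p_{ij,t}\hat p_{ik,t}\hat p_{jk,t}$, i.e.\ it implicitly factors $E[e_{ij,t}e_{ik,t}e_{jk,t}]$ into $E[e_{ij,t}]E[e_{ik,t}]E[e_{jk,t}]$, and concludes $E[\widehat{TP}(G_t)] = TP(P_t)$ exactly. That factorization is invalid: under a single multinomial draw of $|E_t|$ edges the cell counts are negatively correlated, and your falling-factorial identity $E[e_{ij,t}e_{ik,t}e_{jk,t}] = |E_t|(|E_t|-1)(|E_t|-2)\,p_{ij,t}p_{ik,t}p_{jk,t}$ (which applies because the three pairs $(i,j),(i,k),(j,k)$ are distinct cells whenever $i,j,k$ are distinct nodes) gives the correct expectation $E[\widehat{TP}(G_t)] = \frac{(|E_t|-1)(|E_t|-2)}{|E_t|^2}\,TP(P_t)$. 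So the estimator $\widehat{TP}$ as the paper defines it is consistent but carries a multiplicative bias of order $O(1/|E_t|)$; under the paper's own definition of density unbiasedness, the ``unbiased'' half of the claim holds only for your rescaled estimator $\widehat{TP}^*$, not for $\widehat{TP}$ itself. Your approach buys an exact finite-sample statement and a genuinely unbiased corrected estimator (analogous in spirit to the bias-corrected $\widehat{MS}^*$ and $\widehat{DS}^*$ the paper already introduces, but multiplicative rather than additive); the paper's approach buys brevity at the cost of an incorrect intermediate step. The consistency conclusion is unaffected either way, since the prefactor tends to $1$.
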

\vspace{-4mm}

Again, since the true $P$ is unobserved, we cannot calculate the Triangle Probability (TP, Eq.~\ref{eq:tp}) statistic exactly and must use the empirical Triangle Probability \[ \widehat{TP}(G_t) = \sum_{i,j,k \in V, i \ne j \ne k} \frac{e_{ij,t}*e_{ik,t}*e_{jk,t}}{| E_t |^3} \] which is an unbiased estimator of the true statistic (shown below in Theorem \ref{TriangleProbability}).  This means that there is no minimum number of edges necessary to attain an unbiased estimate of the true triangle probability.

\begin{theorem}
$\widehat{TP}(G_t)$ is an consistent and unbiased estimator of $TP(P_t)$.
%The empirical probabilistic triangles is an unbiased estimator of the true probabilistic triangles (Eq.~\ref{eq:tp}).
%\[E[\hat{TP}(G_t)]-TP(G_t)=0 \]
\label{TriangleProbability}
\end{theorem}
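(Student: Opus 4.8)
The plan is to prove the two assertions separately, establishing exact unbiasedness first and then deriving consistency as a consequence. First I would expand the expectation of the empirical statistic by linearity, writing $E[\widehat{TP}(G_t)] = \frac{1}{|E_t|^3} \sum_{i,j,k} E[e_{ij,t} e_{ik,t} e_{jk,t}]$, so that the entire problem reduces to evaluating a single joint third moment of the edge counts for one triangle $(i,j,k)$.

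The key step is the evaluation of $E[e_{ij,t} e_{ik,t} e_{jk,t}]$. Since $i,j,k$ are distinct, the three pairs $\{i,j\}$, $\{i,k\}$, $\{j,k\}$ index three distinct cells of the sampling process. Under the data model, in which each of the $|E_t|$ edges is placed independently with probabilities $P_t$, the counts for distinct cells factor at the level of the expected product, so the joint moment splits into the product of marginal means. Each marginal count is $\text{Bin}(|E_t|, p)$ with mean $|E_t| p$ (the same marginal used in the Barrat and Mass Shift proofs), giving $E[e_{ij,t} e_{ik,t} e_{jk,t}] = |E_t|^3 p_{ij,t} p_{ik,t} p_{jk,t}$. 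Substituting this back cancels the $|E_t|^3$ in the denominator exactly, leaving $E[\widehat{TP}(G_t)] = \sum_{i,j,k} p_{ij,t} p_{ik,t} p_{jk,t} = TP(P_t)$, which holds at every finite $|E_t|$ and establishes unbiasedness.

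For consistency I would then observe that $\widehat{TP}(G_t)$ is a fixed polynomial (a sum of triple products) in the empirical probabilities $\hat p_{ij,t} = e_{ij,t}/|E_t|$. Because each $\hat p_{ij,t}$ is a sample mean of indicator-type draws, it converges to $p_{ij,t}$ as $|E_t| \to \infty$, and a continuous-mapping / plug-in argument then yields $\widehat{TP}(G_t) \to TP(P_t)$; equivalently, unbiasedness together with the fact that the variance of each product term is $O(1/|E_t|)$ forces convergence in probability. Either route shows the estimator is density consistent in the sense of the earlier definition.

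The main obstacle I anticipate is the factorization $E[e_{ij,t} e_{ik,t} e_{jk,t}] = E[e_{ij,t}]E[e_{ik,t}]E[e_{jk,t}]$. A strict multinomial with fixed total $|E_t|$ makes distinct cell counts negatively correlated, which would introduce falling-factorial corrections of the form $|E_t|(|E_t|-1)(|E_t|-2)$ and would yield only asymptotic rather than exact unbiasedness. The exactness claimed here therefore rests on treating the per-cell counts as independent under the ``edges selected independently'' framing of the data model, and the care needed is in stating this factorization cleanly so that the $|E_t|^3$ cancellation is exact rather than approximate.
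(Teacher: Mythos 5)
Your proposal follows the same route as the paper's own proof: expand $E[\widehat{TP}(G_t)]$ by linearity, factor the third cross-moment $E[e_{ij,t}e_{ik,t}e_{jk,t}]$ into the product of marginal means $|E_t|^3\,p_{ij,t}p_{ik,t}p_{jk,t}$, and cancel the $|E_t|^3$. Your consistency half (continuous mapping applied to $\hat p_{ij,t}\to p_{ij,t}$, or the $O(1/|E_t|)$ variance bound) is in fact spelled out more carefully than in the paper, which simply asserts consistency once the expectation computation is done.

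The obstacle you flag at the end is, however, a genuine gap, and it is present in the paper's proof as well. Under the data model as stated --- $|E_t|$ edges placed independently among the $N\times N$ cells with probabilities $P_t$, i.e.\ a multinomial with fixed total --- the counts of distinct cells are negatively correlated, and the correct joint factorial moment for the three distinct pairs of a triangle is $E[e_{ij,t}e_{ik,t}e_{jk,t}] = |E_t|(|E_t|-1)(|E_t|-2)\,p_{ij,t}p_{ik,t}p_{jk,t}$. Hence $E[\widehat{TP}(G_t)] = \bigl(1-\tfrac{1}{|E_t|}\bigr)\bigl(1-\tfrac{2}{|E_t|}\bigr)TP(P_t)$: the estimator is consistent and asymptotically unbiased, but not exactly unbiased. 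Your proposed escape --- treating the per-cell counts as independent because the edges are ``selected independently'' --- does not resolve this, since independence of the individual edge placements with a fixed total is precisely what makes the cell counts multinomial rather than independent; exact factorization would require a Poissonized total, which the paper does not define. So the consistency claim survives under either reading, but for unbiasedness you should either state the multiplicative correction factor explicitly (and note that it vanishes as $|E_t|\to\infty$) or make the independent-cells assumption explicit, rather than asserting an exact cancellation of $|E_t|^3$.
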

\vspace{-5mm}

%\begin{proof} 
%\textbf{Proof of Theorem \ref{TriangleProbability}}

%\subsection*{Proof of Theorem \ref{TriangleProbability}}

%\vspace{-3mm}
\begin{proof}
The expectation of the empirical Triangle Probability can be written as

\vspace{-5mm}
\begin{align*}
&E[ \hat{PT}(G_t)  ] =\sum_{ijk} E[ \hat p_{ij,t} \hat p_{ik,t} \hat p_{jk,t}  ] \\
&= \sum_{ijk} E[  \frac{e_{ij,t}}{| E_t |} \frac{e_{ik,t}}{| E_t |}  \frac{e_{jk,t}}{| E_t |} ] \\
\end{align*}
\vspace{-10mm}

%&= \sum_{ijk} \frac{E[ Bin(| E_t |,p_{i,j})]}{| E_t |} \frac{E[ Bin(| E_t |,p_{i,k})]}{| E_t |}  \frac{E[ Bin(| E_t |,p_{j,k})]}{| E_t |}\\

As the number of edges on any node pair $i,j$ can be represented with a multinomial, the expectation of each is $| E_t |*p_{ij,t}$.  This lets us rewrite the triangle probability as

\vspace{-5mm}
\begin{align*}
&= \sum_{ijk} \frac{p_{ij,t}*| E_t |}{| E_t |} \frac{p_{ik,t}*| E_t |}{| E_t |}  \frac{p_{jk,t}*| E_t |}{| E_t |} \\
&= \sum_{ijk} p_{i,j}  p_{i,k}  p_{j,k} \\
\end{align*}
\vspace{-10mm}
%\end{proof}

Therefore the empirical triangle probability is an unbiased estimator of the true triangle probability and is a density consistent statistic.
\end{proof}

\section{Experiments} %Compare each of ours with one inferior statistic
%\jn{TODO: Add some intro here.}

Now that we have established the properties of density-consistent and -dependent statistics we will show the tangible effects of these properties using both synthetic datasets as well as data from real networks.  The purpose of the synthetic data experiments is to show the ability of hypothesis tests using various statistics to distinguish networks that have different distributions of edges but also a random number of observed edges.  The real data experiments demonstrate the types of events that generate anomalies as well as the characteristics of the anomalies that hypothesis tests using each statistic are most likely to find.

%To demonstrate the advantages of density-consistent statistics over their inconsistent counterparts, we applied the statistics to sets of synthetically generated data as well as data from real networks.  The purpose of the synthetic data experiments is to demonstrate 

\subsection{Synthetic Data Experiments} %Compare each of ours with one inferior statistic

\begin{figure*}[t!]
\subfigure[Enron email network]{\includegraphics[width=.7\columnwidth]{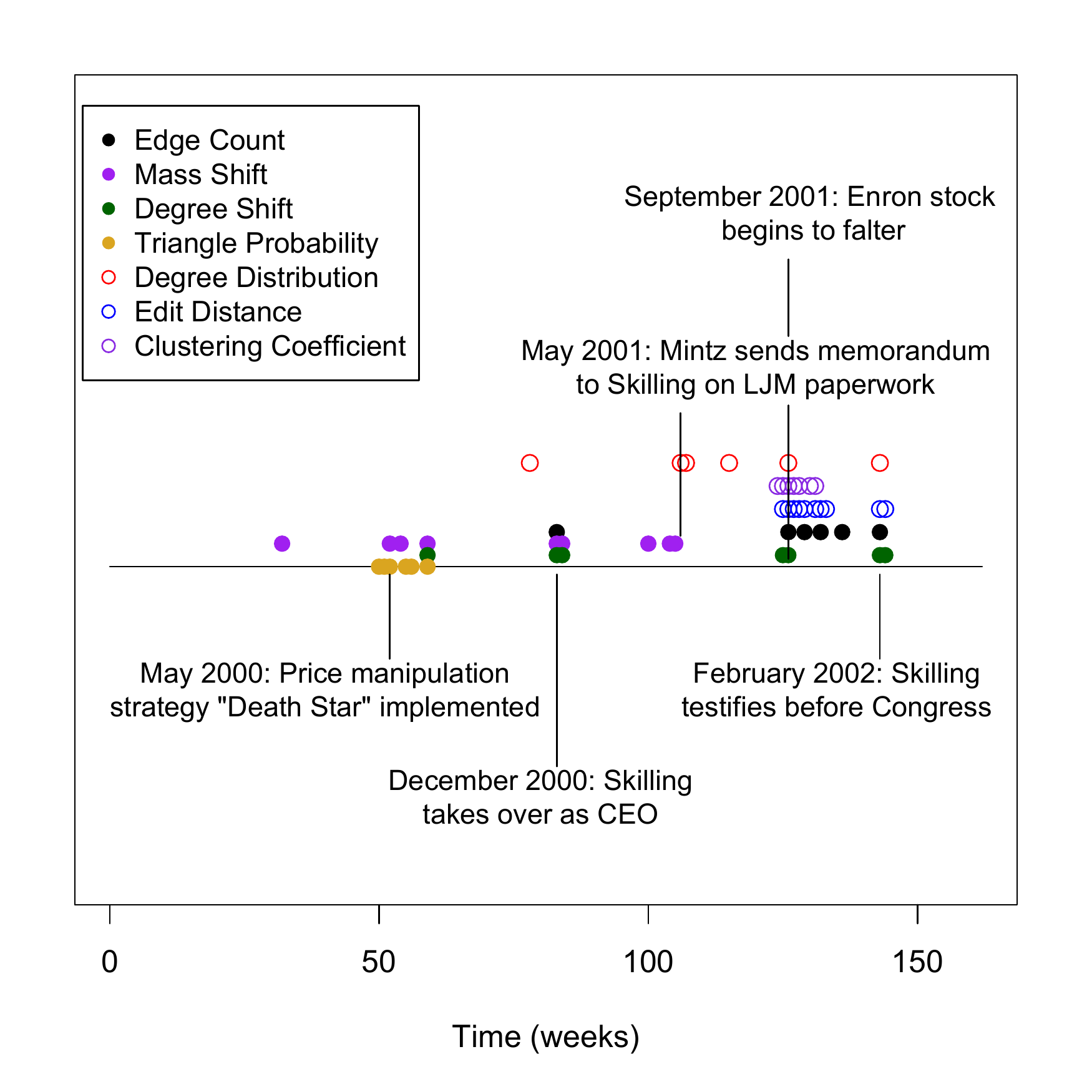} \label{enronTimeline}}
\subfigure[University e-mail network 2011-2]{\includegraphics[width=.7\columnwidth]{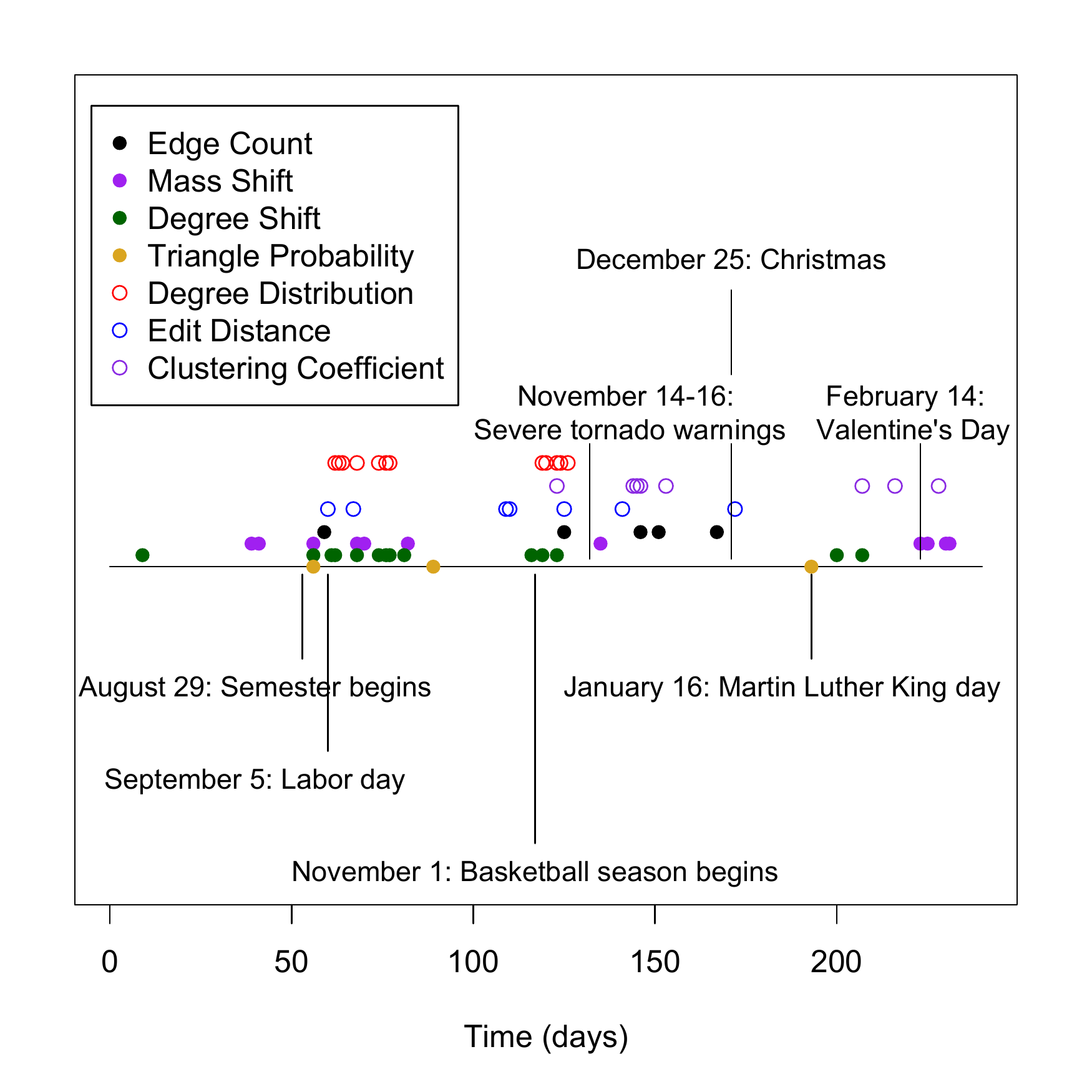} \label{purdueTimeline}}
\subfigure[Facebook Univ. subnetwork 2007-8]{\includegraphics[width=.7\columnwidth]{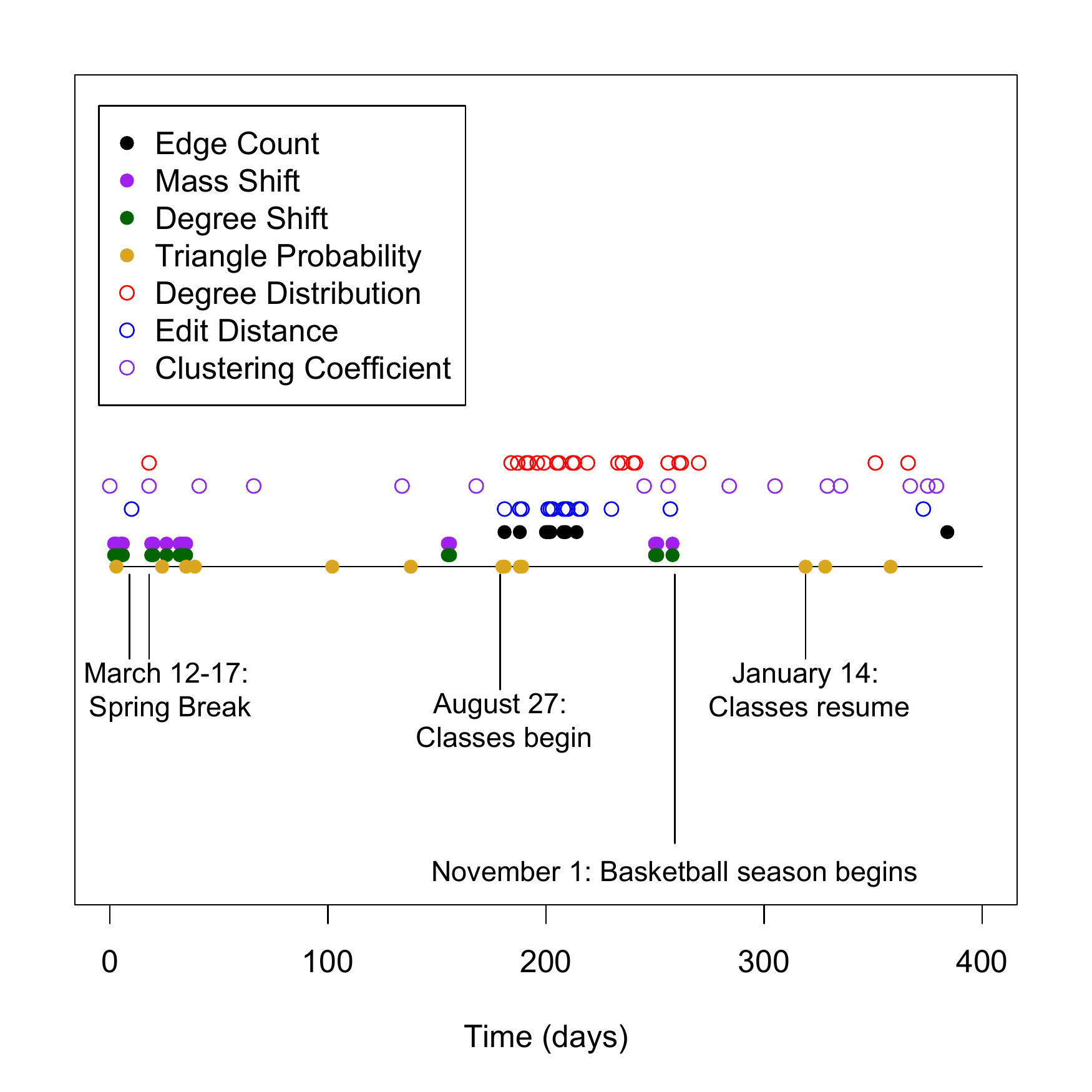}
\label{facebookTimeline}}
\vspace{-3mm}
\caption{Timelines showing reported anomalies using each statistic for three real world networks.}
\label{fig:timeline}
\end{figure*}

%\begin{figure}[t!]
%{\footnotesize
%\begin{tabular}[l]{l | c | c | c | c}
%& A-B & A-C & B-C & other \\
%Dist Model 1 & 0.11  & 0.11 & 0.02 & 0.001  \\
%Dist Model 2 & 0.13  & 0.09 & 0.02 & 0.001  \\
%Dist Model 3 & 0.14  & 0.08 & 0.02 & 0.001  \\
%Dist Model 4 & 0.16  & 0.06 & 0.02 & 0.001  \\
%Dist Model 5 & 0.17  & 0.05 & 0.02 & 0.001  \\
%Dist Model 6 & 0.20  & 0.02 & 0.02 & 0.001  \\
%\end{tabular} 
%\label{distmodels}
%\caption{Mass shift/edit distance blockmodels.}
%}
%\end{figure}
%
%\begin{figure}[t!]
%{\footnotesize
%\begin{tabular}[l]{l | c | c | c | c}
%& A-B & A-C & B-C & other \\
%Tri Model 1 & 0.08  & 0.08 & 0.08 & 0.001  \\
%Tri Model 2 & 0.085  & 0.085 & 0.07 & 0.001  \\
%Tri Model 3 & 0.095  & 0.095 & 0.05 & 0.001  \\
%Tri Model 4 & 0.11  & 0.11 & 0.02 & 0.001  \\
%Tri Model 5 & 0.115  & 0.115 & 0.01 & 0.001  \\
%Tri Model 6 & 0.12  & 0.12 & 0.0 & 0.001  \\
%\end{tabular} 
%\label{trimodels}
%\caption{Transitivity blockmodels.}
%}
%\end{figure}

To validate the ability of Density-Consistent statistics to more accurately detect anomalies than traditional statistics we evaluated their performance on sets of synthetically generated graphs.  Rather than create a dynamic network we generated independent sets of graphs using differing model parameters and a random number of edges.  For every combination of model parameters, statistics calculated from graphs of one set of parameters (or pairs of graphs with the same parameters in the case of delta statistics like mass shift) became the null distribution, and statistics calculated on graphs from the other set of parameters (or two graphs, one from each model parameter in the case of delta statistics) became the test examples.  Treating the null distribution as normally distributed, the critical points corresponding to a p-value of 0.05 are used to accept or reject each of the test examples.  The percentage of test examples that are rejected averaged over all combinations of model parameters becomes the recall for that statistic.

%For every pair of models, the recall of a graph statistic is the percentage of points rejected by a hypothesis test using graphs from one model as the null distribution and graphs from the other model as the test distribution.  

To generate the synthetic graphs, we first sampled a uniform random variable representing the number of edges to insert into the graph.  For each statistic we did three sets of trials with 1k-2k, 3k-5k, and 7k-10k edges respectively.  For mass shift, edit distance, triangle probability, and clustering coefficient each synthetic graph was generated according to the edge probabilities of a stochastic blockmodel using accept/reject sampling to place the selected number of edges amongst the nodes.  Each set of models was designed to produce graphs varying a certain property, i.e. triangle probability and clustering coefficient were applied to models with varying transitivity while mass shift and edit distance skew the class probabilities by a certain amount.  

%The number of edges inserted using the accept/reject sampling is itself a random

%  In the case of delta statistics like edit distance and mass shift the null distribution is generated from pairs of graphs from the same distribution while the test distribution is pairs from different models.  Figures 3 and 4 show the blockmodel probabilities between the three classes A, B and C, each with an equal number of member nodes.

For the degree distribution and degree shift statistics, rather than using a stochastic blockmodel we assigned degrees to each node by sampling from a power-law distribution with varying parameters then using a fast Chung-Lu graph generation process to construct the network.  Unlike a standard Chung-Lu process we allow multiple edges between nodes and we continue the process until a random number of edges are inserted rather than inserting edges equal to the sum of the degrees.  The recalls are calculated the same way as with the stochastic blockmodels, using pairs of graphs with the same power-law degree distribution as the null set and differing degree distributions as the test set.  

\begin{figure}
{\footnotesize
\begin{tabular}[l]{l | c | c | c}
& 1k-2k edges & 3k-5k edges & 7k-10k edges \\
Edit Distance & 0.01 $\pm$ 0.001 & 0.05 $\pm$ 0.01 & 0.21 $\pm$ 0.02  \\
Degree Dist. & 0.15 $\pm$ 0.03 & 0.26 $\pm$ 0.05 & 0.64 $\pm$ 0.06 \\
Clust. Coef. & 0.44 $\pm$ 0.05 & 0.78 $\pm$ 0.05 & 0.78 $\pm$ 0.04  \\
\hline
Mass Shift & 0.51 $\pm$ 0.05& 0.77 $\pm$ 0.05 & 0.88 $\pm$ 0.04  \\
%Degree Shift & 0.616 \pm 0.069 & 0.665 \pm 0.081 & 0.684 \pm 0.08 \\
%Degree Shift & 0.317 \pm 0.069 & 0.665 \pm 0.081 & 0.684 \pm 0.08 \\
Degree Shift & 0.62 $\pm$ 0.07 & 0.63 $\pm$ 0.08 & 0.67 $\pm$ 0.08 \\
Triangle Prob. & 0.77 $\pm$ 0.04 & 0.94 $\pm$ 0.04 & 0.97 $\pm$ 0.03   \\
\end{tabular} 
\label{recallresults}
\caption{Recall when applying each statistic to flag synthetically generated anomalies.  Each cell is an average over all model parameter pairs.}
}
\end{figure}

\begin{figure}[t!]
%\subfigure[]{\includegraphics[width=.66\columnwidth]{BaratFewEdges3D.pdf}}
\subfigure[]{\includegraphics[width=.49\columnwidth]{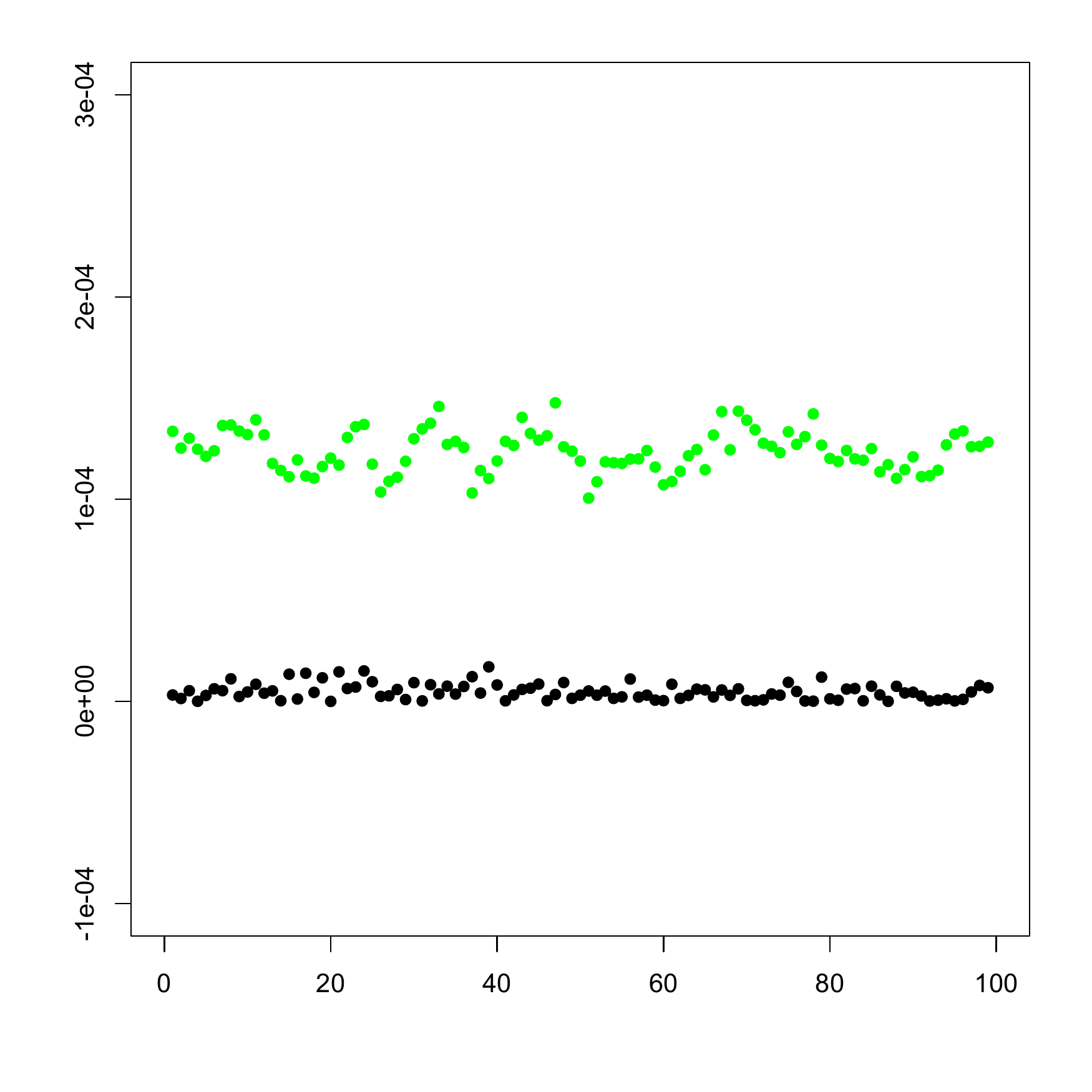}}
\subfigure[]{\includegraphics[width=.49\columnwidth]{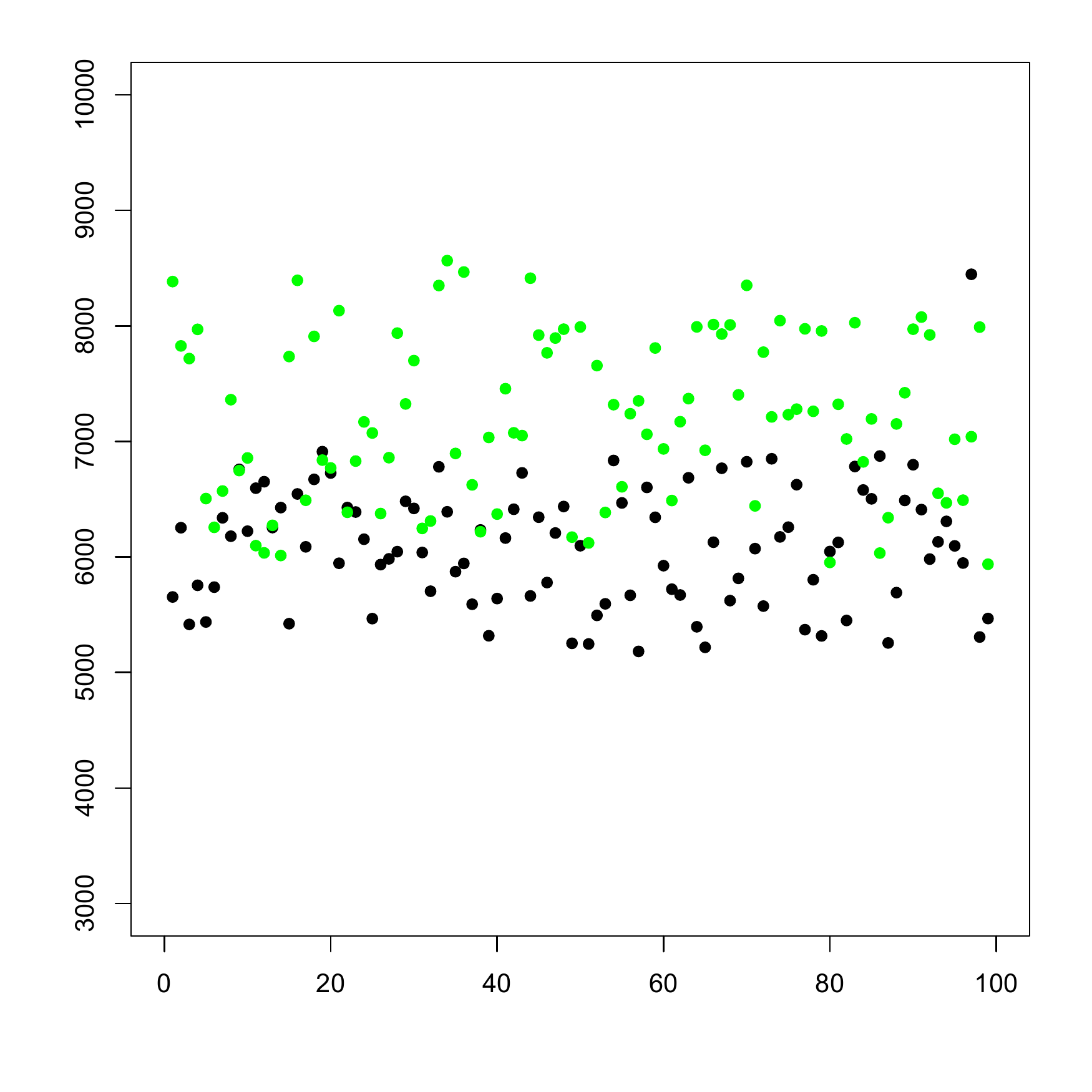}}
%\subfigure[]{\includegraphics[width=.33\columnwidth]{graph250-34-wall-prior.pdf}}
%\subfigure[]{\includegraphics[width=.33\columnwidth]{graph250-34-wall-post.pdf}}
%\subfigure[]{\includegraphics[width=.33\columnwidth]{graph89-35-purdue-prior.pdf}}
%\subfigure[]{\includegraphics[width=.33\columnwidth]{graph89-35-purdue-post.pdf}}
%\subfigure[]{\includegraphics[width=.66\columnwidth]{BaratModerateEdges3D.pdf}}
%\subfigure[]{\includegraphics[width=.66\columnwidth]{BaratManyEdges3D.pdf}}
%\caption{Ability of Barrat Clustering to detect differences in triangle closure probability between learning and test sets.  X-axis is the parameter of the learning set while Y-axis is the test set parameter.  Darker coloration indicates more example points flagged. (a) 1k-2k edges, (b) 3k-5k edges, (c) 7k-10k edges.}
\caption{Comparison of Mass Shifts (a) and Edit Distances (b) produced by synthetic dataset pairs.  The variance of the Edit Distances is due to the variable edge counts of the graphs, and leads to errors when distinguishing the two distributions.} %Each black point is calculated from a pair datasets with the same edge distribution, while each green point is calculated from different distributions.  The number of edges sampled from each distribution is random.}
\label{fig:synthcomparison}
\end{figure}

Figure \ref{recallresults} shows the average recall of each statistic when applied to all models of a certain range of edges.  In general the more edges that are observed the more reliable the statistics are; however, the statistics we have proposed enjoy an advantage over the traditional statistics for all ranges of network sizes, and in some cases this improvement is as large as 200\% or more.  

Figure \ref{fig:synthcomparison} demonstrates the effect of density consistency using one pair of graph models and the mass shift and edit distance statistics.  Each black point represents a pair of datasets drawn from the same distribution while each green point is the statistic value calculated from a pair of points from differing distributions; the number of edges sampled from each distribution was 7k-10k uniformly at random.  Clearly the distribution of statistic values when calculated on graph pairs from the same distribution is different from the distribution of cross-model pairs for both statistics.  However, the randomness of the size of the graphs translates to variance in the edit distance statistics calculated leading to two distributions which are not easily separable leading to reduced recall.  Mass shift, on the other hand, is nearly unaffected by the edge variance leading to two very distinct statistic distributions.

%The results of applying weighted clustering and triangle probability on these graph sets are shown in figures \ref{fig:barrat3d} and \ref{fig:triangle3d} respectively.  The x-axis is the closure ratio in the learning set graphs, while the y-axis is the ratio in the test set.  Darker coloration indicates more of the test graphs were properly rejected; the diagonal should not be frequently rejected as these are when the parameters are the same, but any deviation from the diagonal should be rejected.  Barrat clustering, as a Density-Consistent statistic, experiences increases in its performance as the number of edges in the training and test graphs increases.  Triangle Probability, on the other hand, has superior performance on all sizes of graphs.

\begin{comment}
\begin{figure}[t!]
\includegraphics[width=0.8\columnwidth]{enronTimelineFinal.pdf}
\label{enronTimeline}
\caption{Timeline of major Enron scandal events and the reported anomalies using each statistic.}
\end{figure}

\begin{figure}[t!]
\includegraphics[width=0.8\columnwidth]{purdueTimelineFinal.pdf}
\label{purdueTimeline}
\caption{Timeline of student e-mail network in Fall 2011 and early Spring 2012.}
\end{figure}

\begin{figure}[t!]
\includegraphics[width=0.8\columnwidth]{facebookTimelineFinal.pdf}
\label{facebookTimeline}
\caption{Timeline of the Facebook subnetwork consisting of students from March 2007 to March 2008.}
\end{figure}
\end{comment}

\subsection{Real Data Experiments}

\begin{figure*}[t!]
%\subfigure[]{\includegraphics[width=.66\columnwidth]{BaratFewEdges3D.pdf}}

\includegraphics[trim=0 220 0 0,clip=true,width=2.0\columnwidth]{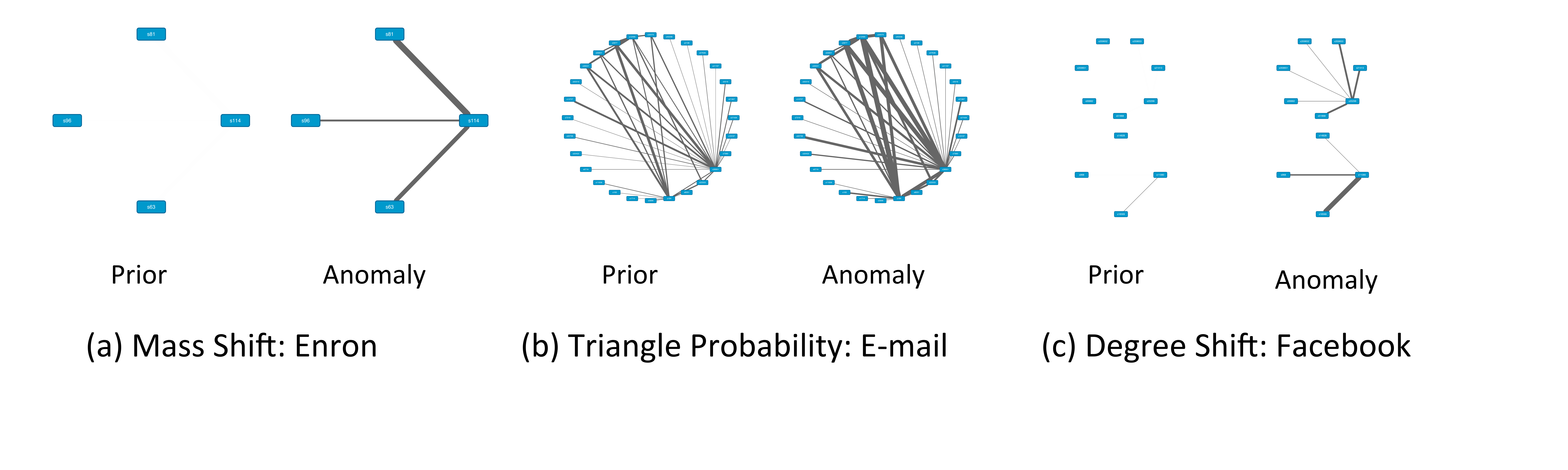}
\caption{Anomalies detected by (a) Mass Shift: Enron time steps 31-32, (b) Triangle Probability: Email time steps 88-89, and (c) Degree Shift: Facebook time steps 249-250. Each plot shows most unusual subgraph for the prior time step and the anomalous one.}
\label{fig:localanoms}
\end{figure*}

\begin{figure*}[t!]
\includegraphics[trim=0 220 0 0,clip=true,width=2.0\columnwidth]{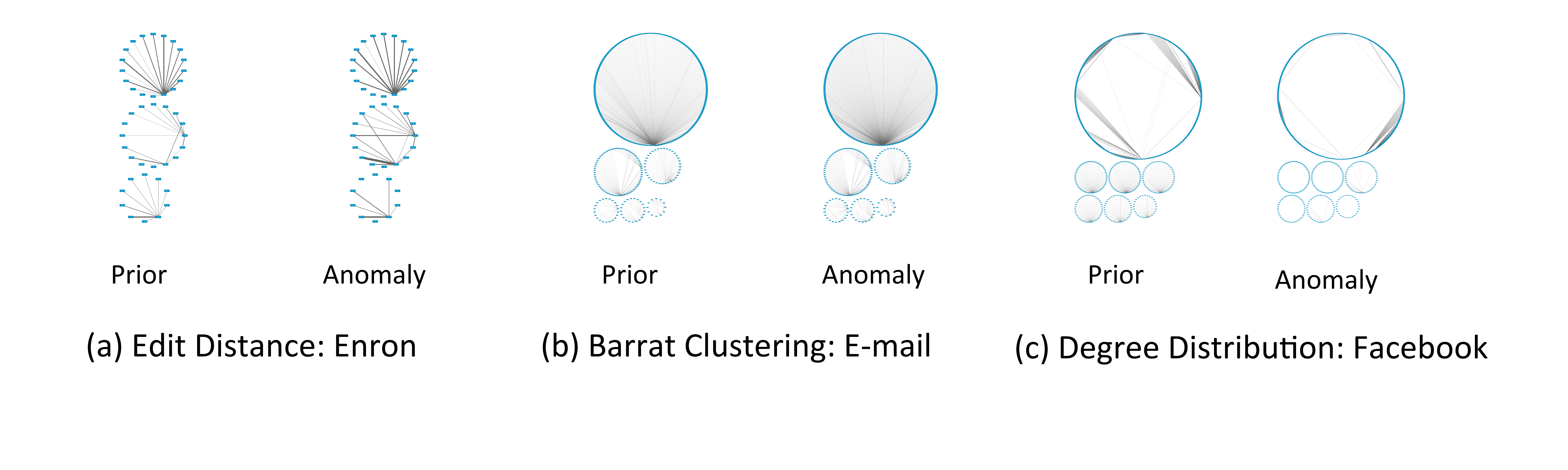}
\caption{Anomalies detected by (a) Edit Distance: Enron time steps 131-132, (b) Barrat Clustering: Email time steps 145-146, and (c) Degree Shift: Facebook time steps 365-366. Each plot shows most unusual subgraph for the prior time step and the anomalous one.}
\label{fig:localanomsother}
\end{figure*}

We will now demonstrate how using Density-Consistent statistics as the test statistics for anomaly detectors improves the ability of detectors to find novel anomalies in real-world networks.  We analyzed three dynamic network, one composed of e-mail communication from the Enron corporation during its operation and collapse, one from e-mail communications from university students, and one from the Facebook wall postings of the subnetwork composed of students at a university.  Solid points represent statistics that we have proposed in this paper while open bubbles represent classic network statistics.  Ideally the major events marked with vertical lines will be found as unusual by the detectors.  %If a statistic tends to flag time steps that have an unusual number of edges it is likely being confounded by the total edge count rather than responding to some anomaly in the distribution of those edges.

%A two-tailed hypothesis test with $\alpha = 0.05$ was used on each statistic after detrending the statistics with a linear segmentation function as described by the authors of \cite{tlafond}.  

A hypothesis test using the statistic in question is applied to every time step to determine its category as anomalous or normal.  The null distribution used is the set of statistics calculated on all other time steps of the stream.  A normal distribution is fit to the statistic values and critical points selected set according to a p-value of 0.05; any time step with a statistic that exceeds the critical points is flagged as anomalous.

Figure \ref{enronTimeline} shows the the timeline of events that occurred during the Enron scandal and breakdown.  The Density-Consistent statistics are able to recover the critical events of the timeline including events that standard statistics are not able to find.  In particular the price manipulation strategy that was implemented in the summer of 2000 and the CEO transition in December 2000 are not found by edge dependent statistics; unlike some of the later events this strategy was not accompanied with an abnormal number of edges so edge dependent statistics produced less of a signal on these points.  In addition, the time steps that are flagged by traditional statistics cluster around a set of edge count anomalies just after the Enron stock begins to crash.  This period has an elevated number of messages in general, so the statistics are responding to the number of edges in these time steps rather than changes in other properties of the network.

Figure \ref{purdueTimeline} is taken from the e-mail communication of students from the summer of 2011 to February 2012.  In general the density-consistent statistics flag time steps where certain events are taking place such as the start of basketball season, Martin Luther King Day, and Valentine's day.  The density-dependent statistics tend to flag more randomly and often coincide with times of unusual total edge count.  %Events to note are Martin Luther King day and Valentine's day, both caught by Density-Consistent statistics and missed by the standard statistics.

Figure \ref{facebookTimeline} is constructed from the set of wall postings between members of a university Facebook group.  As with the Enron data, there is a large set of time steps with an usual total edge count that are also flagged by the density-dependent statistics.  The density-consistent statistics recover the major events such as the beginning of each semester and the start and end of spring break.  Interestingly, they also find unusual activity before and after spring break; project deadlines often fall before or after the break which may explain this activity.
%An interesting find is the unusual activity before and after spring break; project deadlines often fall before or after the break which may explain the activity.  This data also demonstrates that many anomaly flags are generated from edge dependent statistics when the edge count is abnormal, demonstrating their dependence on the edge counts of the network time steps.

%Figures \ref{fig:localanoms} and \ref{fig:localanomsother} are visualizations of the graph structures which caused the time steps to be flagged as anomalous.  These subgraphs were generated by calculating the Mass Shift, Degree Shift, or Triangle Probability on all the edges/nodes of the graph, then plotting the smallest set of edges/nodes that encompass at least 50\% of the total statistic and plotting them along with their neighbors.  Another way of describing this subgraph is that it is the portion of the graph where the edge probabilities changed the most.  \ref{fig:localanoms} are the three time steps and their priors which represent the largest anomalies of only one of the density-consistent statistics, while \ref{fig:localanomsother} are the largest exclusive Edit Distance, Degree Distribution Difference, and Barat Clustering anomalies.  %\jn{Don't call them "posteriors", call them anomalies.}

Let us take a look at the network structures found in the anomalous time steps.  As the mass shift, degree shift, and triangle probability statistics are sums over the edges or nodes of the graph, by decomposing the total statistic into the values generated by each edge/node we can select a subgraph which contributes the most to the statistic.  As these statistics are designed to measure changes in the probability distribution of edges, this subgraph can be considered as the portion of the graph experiencing the most change in its edge probabilities.  

Figure \ref{fig:localanoms} shows the subgraphs of time steps that flag as anomalous in the mass shift, degree shift, or triangle probability; the subgraphs shown account for at least 50\% of the total anomaly score.  As you can see the network structures are quite different in the anomalous time steps versus the prior time steps; often nodes that do not communicate at all in the prior time step will have a significant message volume during the anomaly.  

Figure \ref{fig:localanomsother} shows the subgraphs obtained by decomposing the mass shift, degree shift, or triangle probability on time steps that were flagged as anomalies by the density-dependent statistics but not the density-consistent statistics.  Not all of the density-dependent statistics can be easily decomposed into the node/edge contributions which is why we did the density-consistent decomposition; the subgraphs generated should still be the subset of the graph experiencing the most change to its edge probabilities in that time step.  

The subgraphs discovered in these time steps tend to have less dramatic changes between the anomalous and normal time step which implies that these time steps were likely flagged due to a change in the total number of edges rather than a major shift in the edge probabilities.  Because density-consistent statistics are not sensitive to a global change in edge count they are better at detecting components of the network that have changed radically in their communication behavior.

\section{Conclusions}

In this paper we demonstrate that in order to draw proper conclusions from analysis of dynamic networks it is necessary to use methods which take into account the changes they exhibit.  As most network statistics were designed in an ad-hoc manner to describe informal properties of the network these statistics are not valid for use in a hypothesis testing approach to outlier detection when the network changes in size.  To remedy this we have described the Density-Consistency property for network statistics and shown that statistics that adhere to this property can be used for accurate anomaly detection when the network edge volume is variable.  A Density-Consistent statistic should measure some property of the network that is independent of the edge volume like the distribution of edges or the transitivity.  %In addition, there should be a consistent empirical estimate for the statistic in order to apply the statistic to observed networks.  

We have also proposed three network statistics, Mass Shift, Degree Shift, and Triangle Probability to replace the edge dependent statistics of Graph Edit Distance, Degree Distribution and Clustering Coefficient.  We have proven that our statistics are Density-Consistent and demonstrated using synthetic trials that anomaly detectors using the consistent statistics have superior performance.  When applied to real datasets, anomaly detectors utilizing Density-Consistent statistics are often able to recover events that are missed by edge dependent statistics.

%We hope that this work provides a step towards understanding the critical properties of networks and discovering when
\vspace{3mm}
\bibliographystyle{abbrv}
\bibliography{WWW2015}

\begin{thebibliography}{10}

\bibitem{berlingerio}
M.~Berlingerio, T.~D.~Koutra, and C.~Faloutsos.
\newblock Network similarity via multiple social theories.
\newblock {\em In Proceedings of the 5th IEEE/ACM International Conference on
  Advances in Social Networks Analysis and Mining}, pages 1439--1440, 2012.

\bibitem{banks}
K.~C. D.~Banks.
\newblock Models for network evolution.
\newblock {\em Journal of Mathematical Sociology}, 21:1-2:173--196, 1996.

\bibitem{faloutsos}
M.~Faloutsos, P.~Faloutsos, and C.~Faloutsos.
\newblock On power-law relationships of the internet topology.
\newblock {\em ACM SIGCOMM Computer Communication Review}, 29(4), 1999.

\bibitem{gao}
X.~Gao, B.~Xiao, D.~Tao, and X.~Li.
\newblock A survey of graph edit distance.
\newblock {\em Journal of Pattern Analysis and Applications}, 13(1):113--129,
  2010.

\bibitem{koutra}
D.~Koutra, J.~Vogelstein, and C.~Faloutsos.
\newblock Deltacon: A principled massive-graph similarity function.
\newblock {\em SDM}, 2013.

\bibitem{tlafond}
T.~LaFond, J.~Neville, and B.~Gallagher.
\newblock Anomaly detection in networks with changing trends.
\newblock {\em ODD$^2$'14}, 2014.

\bibitem{mcculloh}
I.~McCulloh.
\newblock Detecting changes in a dynamic social network.
\newblock Technical Report CMU-ISR-09-104, Carnegie Mellon University, 2009.

\bibitem{mcculloh2}
I.~A. McCulloh and K.~M. Carley.
\newblock Social network change detection.
\newblock Technical Report No. CMU-ISR-08-116, Carnegie Mellon University,
  2008.

\bibitem{priebe}
e.~a. Priebe, Carey~E.
\newblock Scan statistics on enron graphs.
\newblock {\em Computational \& Mathematical Organization Theory},
  11.3:229--247, 2005.

\bibitem{saramaki}
e.~a. Saramaki, Jari.
\newblock Generalizations of the clustering coefficient to weighted complex
  networks.
\newblock {\em Physical Review}, 75.2, 2007.

\bibitem{shalizi}
C.~R. Shalizi and A.~Rinaldo.
\newblock Consistency under sampling of exponential random graph models.
\newblock {\em The Annals of Statistics}, 41:2:508--535, 2013.

\bibitem{siris}
V.~A. Siris and F.~Papagalou.
\newblock Application of anomaly detection algorithms for detecting syn
  flooding attacks.
\newblock {\em Computer Communications}, 29:1433--1442, 2006.

\bibitem{snijders4}
e.~a. Snijders, Tom.
\newblock New specifications for exponential random graph models.
\newblock {\em Sociological methodology}, 36.1:99--153, 2006.

\bibitem{snijders2}
S.~B. Snijders.
\newblock Non-parametric standard errors and tests for network statistics.
\newblock {\em CONNECTIONS}, 22(2):161--170, 1999.

\bibitem{snijders}
T.~A. Snijders.
\newblock Models for longitudinal network data.
\newblock {\em Models and methods in social network analysis 1}, pages
  215--247, 2005.

\bibitem{tartakovsky}
A.~Tartakovsky, H.~Kim, B.~Rozovskii, and R.~Bla{\u z}ek.
\newblock A novel approach to detection of 'denial-of-service' attacks via
  adaptive sequential and batch-sequential change-point detection methods.
\newblock {\em IEEE Transactions on Signal Processing}, 54:9, 2006.

\end{thebibliography}
%\bibliography{AAAI2015bib} 
%\bibliographystyle{aaai}
%\bibliographystyle{abbrv}
%\bibliography{sigproc}

\end{document}